\documentclass[12pt,a4paper,normalheadings,headsepline,headinclude,bibtotoc,fleqn,review]{article}
\usepackage[latin1]{inputenc}
\usepackage{amsmath}
\usepackage{booktabs}
\usepackage{array}
\usepackage{amsthm}
\usepackage{dcolumn}
\usepackage{endnotes}
\usepackage{units}
\usepackage{marvosym}
\usepackage[lines=10]{geometry}
\usepackage{longtable}
\usepackage{rotating}
\usepackage{expdlist}
\usepackage{graphicx}
\usepackage{wrapfig}
\usepackage{expdlist}
\usepackage{amssymb}
\usepackage{gloss}
\usepackage{nomencl}
\usepackage{natbib}

\newtheorem{lemma}{Lemma}
\newtheorem{prop}{Proposition}
\newtheorem{coro}{Corollary}

\setlength{\nomlabelwidth}{.2\hsize}

\setlength{\nomitemsep}{-1\parsep}

    \makegloss

\oddsidemargin 0cm \textwidth 16cm \linespread{1.4} \topmargin 0cm
\textheight 24cm \headheight 0cm \headsep 0cm
\setlength{\footskip}{10mm} \setlength{\headsep}{4mm}
\setcounter{secnumdepth}{4}\setcounter{tocdepth}{4}
\renewcommand{\thesection}{\arabic{section}}

\usepackage{fancyhdr} \cfoot{\thepage}
\rhead{\thesection} \chead{}

\begin{document}


\begin{center}\huge{An Equilibrium Model with Computationally Constrained Agents}\footnote{I am particularly indebted to Martin
Hellwig for detailed comments on an earlier draft of this paper. I
also thank Dominik Grafenhofer, Sebastian Klein, Harvey Lapan, and
Carl Christian von Weizsäcker for discussions on equilibrium
models. Finally, I received helpful questions and comments from seminar participants in Bonn. First draft July 2015.}\end{center} 

\begin{center} \emph{Wolfgang Kuhle}\\\emph{Max Planck Institute for Research on Collective Goods, Kurt-Schumacher-Str. 10, 53113 Bonn, Germany. Email:
kuhle@coll.mpg.de.}
\end{center}




\noindent\emph{\textbf{Abstract:} We study a large economy in
which firms cannot compute exact solutions to the non-linear
equations that characterize the equilibrium price at which they
can sell future output. Instead, firms use polynomial expansions
to approximate prices. The precision with which they can compute
prices is endogenous and depends on the overall level of supply.
At the same time, firms' individual supplies, and thus aggregate
supply, depend on the precision with which they approximate
prices. This interrelation between supply and price forecast
induces multiple equilibria, with inefficiently low output, in
economies that otherwise have a unique, efficient equilibrium.
Moreover, exogenous parameter changes, which would increase output
were there no computational frictions, can diminish agents'
ability to approximate future prices, and reduce output. Our model
therefore accommodates the intuition that interventions, such as
unprecedented quantitative easing, can put agents into ``uncharted territory''.}\\
\textbf{Keywords: Polynomial Inference, Self-Referential Equilibria, Glitch Equilibria}\\

\section{Introduction}\label{sg1}

Few people would claim that they are able to compute future
equilibrium outcomes, such as prices, with any accuracy. Despite
this, textbook models implicitly assume that, given all relevant
data, agents compute exact numeric values for future equilibrium
prices, respectively, the entire distribution of these prices if
the model involves risk. In this paper, we assume that economic
agents are computationally constrained to the use of polynomial
functions. That is, instead of being able to solve arbitrary
non-linear problems, they rely on polynomial expansions to
approximate future equilibrium outcomes. Put differently, agents
act just like economic researchers who use polynomials, such as
the Arrow-Pratt approximation, to restate complicated non-linear
problems in terms of workable polynomials.

Using a two-period model, in which firms employ polynomial
approximations to infer future selling prices for their output, we
find that multiple equilibria emerge in well-behaved economies
that would have a unique, efficient equilibrium if agents could
compute future equilibria with perfect accuracy. Moreover,
exogenous parameter changes, which would increase economic
activity were agents computationally unconstrained, can reduce
economic activity as they make it harder for agents to approximate
equilibrium prices.

Our results rely on the fact that there are levels of supply where
a polynomial approximation to the function, which relates
equilibrium supply to equilibrium price, is of good quality, and
other levels where it is of low quality. Put differently, a firm's
ability to compute equilibrium prices changes with the level of
aggregate supply. At the same time, individual supply, and thus
aggregate supply, varies with the precision with which agents can
predict prices. This interaction gives rise to two coexisting
types of equilibria. In the first, economic activity falls into
intervals where agents' polynomial approximations are of high
quality and the role of the computational friction is small. These
equilibria can coincide with the rational expectations equilibrium
(REE). In the second type, computational frictions are important
and agents find it difficult to predict prices: Aggregate supply
is (i) low and (ii) falls into an interval where agents'
approximations, to the equation describing equilibrium, are of
poor quality.

In one interpretation, we may think of a farmer who must decide in
spring how much corn he should plant. This farmer may know the
price at which corn tends to sell in years with ``normal" supply.
Moreover, he might know that small increases in aggregate supply
tend to reduce prices, i.e, that demand is locally downward
sloping. Finally, he may know that this downward slope tapers off
as supply increases. The farmer, however, is unable to calculate
all numeric values that the demand function takes over its entire
domain. If he wants to calculate those prices, which obtain once
supply differs from those levels that he is familiar with, he must
use a polynomial expansion to extrapolate the new price. In a
macroeconomic interpretation, we think of a large number of firms
that have to choose production today in anticipation of future
demand. These firms know the price at which their goods sell in
``normal times". However, if firms collectively cut production
today, it will be difficult for them to know whether future
selling prices increase, due to reduced supply, or fall, since the
layoffs, associated with production cuts, reduce
demand.\footnote{\citet{Dia82}, and \citet{Coo88}, develop models
where search frictions result in upward-sloping aggregate demand
functions. See \citet{Hel93} for a review of models with
non-monotonic demand. More generally, such effects are important
in economies where Say's law, stating that supply creates its own
demand, is of relevance. See also \citet{Cha14} for a related
model of savings and investment.} This intuition extends naturally
to economies where demand concerns a vector of goods, which may
involve substitutes and complements. In such a setting it appears
even more natural to assume that firms cannot solve for the
overall equilibrium. Instead, a firm, which produces a particular
good, may, if it is exceptionally well informed, use the economy's
Jacobian matrix to compute demand for its particular good in
terms of a first-order polynomial approximation. 

Regarding parameter changes, the uncertainty that agents face in
our model does not originate from a world with stochastically
changing parameters. Instead, agents know the magnitude of the
parameter change in advance; the difficulty is to predict its
consequences. As an illustration, we refer to two representative
comments made on the quantitative easing program: Stanley
Druckenmiller, an accomplished investor with a thirty-year track
record, commented in retrospect ``I didn't know how it was going
to end... I would have said inflation [which] would have been dead
wrong."\footnote{Speech given at the 2015 Dealbook conference.}
Similarly, taking an ex-ante perspective, Joseph Stiglitz
predicted that QE2 would likely bring interest rates down ``a
little", but that it was unclear what the risks, ranging from
economic growth to``a whole set of other potential risks that -
may result from this policy"\footnote{Interview given on the
Charlie Rose show on 3 November 2010.}, were.\footnote{See
\citet{Sor94} for case studies highlighting market participants'
difficulties to anticipate the impact that pre-announced central
bank policies have on macroeconomic equilibrium variables.
Similarly, \citet{Nie97}, p. 381, concludes his discussion on
excess demand functions: ``The difficulty is that nobody knows
what the equilibrium level is until at least the morning after the
fact."} Likewise, there appears to be no consensus among observers
how, if at all, a UK exit from the EU will affect the economies of
the UK and the remainder EU. Similar arguments apply to more
long-term problems such as the demographic
transition.\footnote{That is, we know from birth statistics that
cohorts entering the labor market will be smaller and cohorts
entering retirement will grow. At the same time, it proves
difficult to predict how such changes impact future growth paths.}
As we show, agents' inability to perform comparative statics
alters model predictions considerably: Large parameter changes,
which would unambiguously increase output if agents could compute
the model's comparative statics correctly, can reduce output. The
model therefore accommodates the intuition that interventions,
such as unprecedented quantitative easing, can put agents into
``uncharted territory'', i.e., diminish agent's ability to
forecast relevant equilibrium variables. Hence, even though
markets work inefficiently, the government's ability to improve
market outcomes is limited.

\emph{Related Literature:} Due to their bounded computational
capacity, agents work with an approximate, misspecified model.
Except for special cases, they cannot form rational expectations
in the sense of \citet{Hut37}, \citet{Gru54}, \citet{Mut61},
\citet{Bla79} and \citet{DeC79}, which are consistent with the
true model. Regarding model misspecification, our approach is thus
akin to the literature on learning, \citet{Bra82},
\citet{Mar89}, and \citet{Sar93}, where agents use a misspecified 
least squares approach to infer unknown model parameters.
\citet{Rot74} and \citet{Mcl84} model firms that experiment with
different supply functions to learn about stochastic demand. Firms
in our model are small, and thus changing individual supply has no
influence on prices. That is, in the dynamic extension of our
model, the information that firms learn over time is determined by
overall equilibrium rather than individual experimentation.

We interpret our baseline model as a simple $A^D,A^S$ setting, as
in \citet{Key36} and \citet{Sam09}, which is augmented with a
computational friction. In Section \ref{ss16}, we show that our
model may be reinterpreted as a \citet{Dia82} and \citet{Coo88}
aggregate search model, where the probability of finding a trading
partner depends on the equilibrium level of economic activity. In
this interpretation, agents' computational constraint makes it
difficult for them to compute the equilibrium probability of
finding a trading partner. Regarding government intervention,
\citet{Dia82} and \citet{Coo88} find positive multipliers, which
are due to the search friction. In the current model, where the
search friction is coupled with a computational friction,
government intervention has a non-monotonous effect on output.

\citet{Tes06}, \citet{Gin07}, \citet{Far09}, and \citet{Thu12}
argue for ``agent-based'' models in which agents follow decision
rules that do not necessarily coincide with rational
behavior.\footnote{One argument for such a departure is
computational complexity: \citet{Rub98b}, \citet{May11} and
\citet{Ack11} emphasize that agents might be constraint in their
ability to count or to compute conditional probabilities.
\citet{Nel85}, \citet{Ula08}, \citet{Hof11},
\citet{Art15}, and \citet{Kuh16} for evolutionary models where biases emerge endogenously.} 
In the current paper, agents are computationally constrained to
the use of polynomial expansions. This case is of special interest
since researchers in economics, physics, and engineering indeed
rely on first- and second-order polynomial expansions, rather than
exact solutions, to understand non-linear problems; equilibrium
comparative statics of a well behaved model, $y=f(y;b)$, are
commonly evaluated in terms of a first-order polynomial expansion 
$\Delta y\thickapprox\Delta y f_y+ \Delta b f_b$, which yields
$\frac{\Delta y}{\Delta b}\thickapprox\frac{f_b}{1-f_y}$,
respectively, $\frac{\partial y}{\partial
b}=\frac{f_b}{1-f_y}$.\footnote{Similarly, \citet{Mas95}, pp.
599-641, use first-order polynomial expansions to examine
non-linear demand in pure exchange economies with many
commodities. As mentioned earlier, \citet{Def52}, \citet{Arr71},
\citet{Pra64} use second-order expansions to study expected
utility. Likewise, the familiar first- and second-order
conditions, $f'(x_0)=0$ and $f''(x_0)<0$, for a smooth function
$f(x)$ to have local maximum at point $x_0$, stem from an
expansion
$f(x)=f(x_0)+f'(x_0)(x-x_0)+\frac{1}{2}f''(x_0)(x-x_0)^2+O^3$; see
\citet{Chi05}, pp. 250-253, or \citet{Sam47}, pp. 357-379.
Related, the stability of differential and difference equations,
\citet{Sam47}, pp. 21-121, 257-349, and 380-439, or \citet{Gal07}.
Finally, \citet{DeC79}, p. 52, points out that solving
expectations equilibria requires either that the model is assumed
to be linear, or that the model's equations have to be linearized,
i.e., rewritten in terms of a first-order polynomial, which is
what our agents do.}
Hence, we argue that the current model is methodologically
consistent in the sense that the outside researcher, i.e., the
paper's reader, will use the same method of analysis that is used
by the model's agents.


Section \ref{s1} abstracts from computational frictions and
identifies the unique rational expectations equilibrium. Section
\ref{sg2} studies equilibria that obtain with computationally
constrained agents. Section \ref{s3} considers the impact of
exogenous parameter changes. Section \ref{s5} studies the
economy's convergence to the rational expectations equilibrium in
a dynamic setting, where firms accumulate empirical knowledge.
Section \ref{s4} introduces asymmetric information. 
In Section \ref{ss16}, we suggest different interpretations of our
baseline model. Section \ref{s7} concludes.


\section{Model}\label{s1}
We study a large economy in which a mass one of firms $i\in[0,1]$
produce a homogenous good. There are two periods of time. In the
first period, each firm chooses to produce a quantity of goods
$a_i$ in anticipation of a future selling price $\hat{P}$. In the
second period, firms sell the finished products $a_i$
inelastically to consumers at a market clearing price $P$. For
simplicity, to ensure uniqueness of the REE in the economy without
computational friction, we assume that aggregate demand is twice
continuously differentiable and monotonously downward-sloping in
goods quantity $A$:
\begin{eqnarray} P=\phi(A), \quad \phi_A<0, \quad \phi_{AA}\gtrless0,\quad \phi(0)>0.\label{e1}\end{eqnarray}
Demand (\ref{e1}) represents the model's non-linearity,
respectively, the computational obstacle that agents have to
overcome. In Section \ref{ss16}, we suggest three different
interpretations of $\phi()$ by showing that it captures the
non-linearities that individual agents face when they make
forward-looking decisions in the standard workhorse models of
\citet{Dia65}, \citet{Dia82}, and the $A^D,A^S$ model of
\citet{Sam09}. That is, $\phi$ may be interpreted as (i) future
returns to savings, (ii) the probability of finding a trading
partner, or (iii) the selling price for output.

Firm $i$ chooses a production schedule $a_i^*$ to maximize
expected profits
\begin{eqnarray} a^*_i=\underset{a_i}{\arg\max}\Big\{\pi_i=a_i\hat{P}-\frac{1}{2}a_i^2\Big\},\quad a_i\geq 0,\nonumber\end{eqnarray}
where $\hat{P}$ is the firm's expectation regarding the selling
price and $\frac{1}{2}a_i^2$ is a quadratic cost function. Hence,
agent $i$ supplies
\begin{eqnarray} a_i^*=\hat{P}\nonumber\end{eqnarray}
and aggregate supply is
\begin{eqnarray}
A=\int_{[0,1]}a^*_idi=\hat{P}.\label{e33}\end{eqnarray} If agents
are computationally unconstrained, they can compute demand
(\ref{e1}) over its entire domain. That is, for each level of
aggregate supply $A$, they form rational price expectations
$\hat{P}=\phi(A)$. In turn, they combine (\ref{e1}) and
(\ref{e33}) to calculate the unique equilibrium quantity $A_0$:
\begin{eqnarray} A_0=\phi(A_0),\label{e4}\end{eqnarray}
and, using (\ref{e1}), they compute equilibrium price $P_0$:
\begin{eqnarray}P_0=\phi(A_0).\label{e41}\end{eqnarray}
Accordingly, we have
\begin{lemma}\label{p1} There exists a unique, rational expectations equilibrium $\{A_0,P_0\}\in\mathcal{R}^2_+$.
In this equilibrium agents forecast prices correctly
$\hat{P}=P_0$.\end{lemma}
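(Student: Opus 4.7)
The plan is to reduce the lemma to showing that a single scalar equation has a unique positive solution, and then to verify consistency of the price forecast. Define the auxiliary function $g:\mathbb{R}_+\to\mathbb{R}$ by $g(A):=\phi(A)-A$. Equation (\ref{e4}) is exactly $g(A)=0$, so existence and uniqueness of the REE quantity is equivalent to existence and uniqueness of a positive zero of $g$.

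For existence, I would first note that $g(0)=\phi(0)>0$ by the last part of (\ref{e1}). Since $\phi_A<0$ and $\phi(0)>0$, the map $\phi$ is strictly decreasing, so $\phi(A)\leq\phi(0)$ for every $A\geq 0$, which gives $g(A)\leq\phi(0)-A\to-\infty$ as $A\to\infty$. By continuity of $\phi$ (implied by twice continuous differentiability) and the intermediate value theorem, there exists some $A_0>0$ with $g(A_0)=0$, i.e.\ $A_0=\phi(A_0)$. Uniqueness follows because $g$ is strictly decreasing: $g'(A)=\phi_A(A)-1<-1<0$ on $\mathbb{R}_+$, so $g$ has at most one zero.

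Having pinned down $A_0$, I would define $P_0:=\phi(A_0)$ in accordance with (\ref{e41}); since $A_0=\phi(A_0)$ we in fact have $P_0=A_0>0$, so $(A_0,P_0)\in\mathbb{R}_+^2$. It remains to check rationality of expectations. By (\ref{e33}) aggregate supply coincides with the forecast, $A=\hat P$, and market clearing plus (\ref{e1}) give the realized price $P=\phi(A)=\phi(\hat P)$. A rational forecast is a fixed point $\hat P=\phi(\hat P)$; the argument above, applied to $\hat P$ in place of $A$, shows this fixed point is unique and equals $A_0=P_0$. Hence $\hat P=P_0$.

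No step is genuinely hard: the only place where some care is needed is the argument that $g$ eventually becomes negative, for which monotonicity of $\phi$ together with the upper bound $\phi(A)\leq\phi(0)$ is enough and no growth condition at infinity is required. The assumption $\phi_{AA}\gtrless 0$ plays no role here; it is presumably kept in reserve for the analysis of the computationally constrained equilibria in Section \ref{sg2}.
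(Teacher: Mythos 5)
Your proposal is correct and follows essentially the same route as the paper's own (much terser) proof: existence and uniqueness of the fixed point $A_0=\phi(A_0)$ from $\phi(0)>0$ and $\phi_A<0$, then $P_0=\phi(A_0)=A_0$ and $\hat P=A_0=P_0$ via (\ref{e33}). Your added detail (the auxiliary function $g$, the bound $g(A)\leq\phi(0)-A$, and the intermediate value theorem) merely makes explicit what the paper leaves implicit.
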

\begin{proof} Market clearing (\ref{e4})-(\ref{e41}) determines equilibrium quantity $A_0>0$, which is unique since $\phi(0)>0$ and $\phi_A<0$.
Using (\ref{e1}), the equilibrium price is $P_0=\phi(A_0)=A_0>0$.
Finally, (\ref{e33}) indicates that $\hat{P}=A_0$ and thus
$\hat{P}=P_0$.\end{proof}


\subsection{Polynomial Equilibria}\label{sg2}

We now assume that firms cannot compute demand over its entire
domain. Instead, they are familiar with a point on the demand
function, $A^*,\phi(A^*)$, and the demand function's slope
$\phi_A(A^*)$ at this point. It is convenient to start with the
assumption that this point is the REE of Lemma \ref{p1}, i.e,
agents know $A_0,\phi(A_0)$, and the slope $\phi_A(A_0)$. In turn,
once supply differs from $A_0$, agents use polynomial expansions
to extrapolate demand to estimate the resulting price. Polynomial
equilibrium points will be those points where the polynomial,
which mimics true demand, intersects with supply. Put differently,
``polynomial equilibria" are those points that solve the agents'
approximate model. The REE from the previous section will be one,
but not the only, such equilibrium.


Expanding demand (\ref{e1}) around the perfect foresight
equilibrium, agents forecast the selling price (\ref{e1}) as:
\begin{eqnarray} \hat{P}=\phi(A_0)+\phi_A(A_0)\Delta A, \quad \Delta A=A-A_0. \label{hat}\end{eqnarray}
Equation (\ref{hat}) reflects that agents cannot numerically
compute the true price $P=\phi(A_0+\Delta A)$ at which a supply
$A=A_0+\Delta A$ sells. The reliability of estimate (\ref{hat})
decreases the more aggregate supply $A$ differs from
$A_0$.\footnote{Note that the model's coefficients throughout can
be chosen such that the equilibrium deviation $\Delta A$ is
arbitrarily small, respectively, such that the approximation
(\ref{hat}) is of arbitrarily good quality.} We assume that agents
choose output $a_i$ to maximize:
\begin{eqnarray} a^*_i=\underset{a_i}{\arg\max}\Big\{\pi=a_i\hat{P}-\frac{1}{2}a_i^2-a_i\tau\Delta A^2\Big\},\quad \tau \geq 0.\label{e2}\end{eqnarray}
The profit criterion (\ref{e2}) allows for two interpretations. 
In the first, $a_i\hat{P}-\frac{1}{2}a_i^2$ is the firm's profit
given the price estimate $\hat{P}$, and $-a_i\tau\Delta A^2$
reflects that firms, knowing their estimate is based on a
first-order expansion, which neglects second-order terms, discount
$\tau>0$ the estimated revenue. 
In a second interpretation, which we elaborate on in Proposition
\ref{p3} of Appendix \ref{A2}, $-\tau$ represents the demand
function's second derivative $\phi_{AA}$. In this case agents do
not discount their price estimate, and rely on a second-order
Taylor-series expansions to estimate the selling
price.\footnote{That is, if agents knew demand's second
derivative, their price estimate (\ref{hat}) would write
$\hat{P}=\phi(A_0)+\phi_A\Delta A+\frac{1}{2}\phi_{AA}\Delta A^2$.
Substituting this into (\ref{e2}), and setting the discount rate
$\tau=0$, yields
$a^*_i=\underset{a_i}{\arg\max}\Big\{a_i(\phi(A_0)+\phi_A\Delta
A+\frac{1}{2}\phi_{AA}\Delta A^2)-\frac{1}{2}a_i^2\Big\}$.
Comparison indicates that the new profit criterion is equivalent
to the old, (\ref{e2}), except for the second-order derivative
$\frac{1}{2}\phi_{AA}$ taking the place of the discount rate
$-\tau$.}

From (\ref{e2}), we obtain individual and aggregate supply:
\begin{eqnarray} a_i^*=\hat{P}-\tau\Delta A^2,\quad A=\int_{i\in[0,1]}a_idi=\hat{P}-\tau\Delta A^2.\label{e3}\end{eqnarray}
Combining supply (\ref{e3}) and estimated demand (\ref{hat}), the
equilibrium quantity $A$, where supply intersects with the demand
estimate, is the solution to:
\begin{eqnarray} A=\phi(A_0)+\phi_A(A_0)\Delta A-\tau\Delta A^2. \label{e5} \end{eqnarray}
For convenience, we identify equilibria $j=0,1,2...$, in terms of
their distance $\Delta A_j=A_j-A_0$ to the rational expectations
equilibrium $A_0$. That is, $\Delta A_j=0$ corresponds to the REE.
Using the fact that $A_0=\phi(A_0)$, we rewrite (\ref{e5}) as:
\begin{eqnarray} \tau\Delta A^2+(1-\phi_A)\Delta A=0, \nonumber \end{eqnarray}
and note:
\begin{prop} There exists the rational expectations equilibrium $\Delta A_0=A_0-A_0=0$ in which agents' price forecasts are
correct $\hat{P}=P_0$. There exists a second equilibrium $\Delta
A_1=A_1-A_0=-\frac{(1-\phi_A)}{\tau}<0$ in which
$\hat{P_1}\gtreqqless P_1$.\label{p2}
\end{prop}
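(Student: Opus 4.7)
The plan is to read off the two equilibrium values of $\Delta A$ directly from the quadratic equation the authors have already derived, and then compare the forecasted price $\hat{P}$ with the true market clearing price $P=\phi(A)$ at each equilibrium using a Taylor remainder.

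First I would factor the equilibrium condition
\begin{eqnarray}
\tau\Delta A^2+(1-\phi_A)\Delta A=\Delta A\bigl(\tau\Delta A+(1-\phi_A)\bigr)=0,\nonumber
\end{eqnarray}
which immediately yields the two candidate solutions $\Delta A_0=0$ and $\Delta A_1=-(1-\phi_A)/\tau$. The assumptions in (\ref{e1}) give $\phi_A(A_0)<0$, so $1-\phi_A>1$, and since $\tau>0$ this forces $\Delta A_1<0$, establishing that $\Delta A_1$ is genuinely distinct from $\Delta A_0$ and lies below the REE.

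Next I would verify the price-forecast claims. For $\Delta A_0=0$, equation (\ref{hat}) reduces to $\hat{P}=\phi(A_0)$, while the true market clearing price is $P_0=\phi(A_0)$ by Lemma \ref{p1}, so $\hat{P}=P_0$ holds, confirming that the REE survives as a polynomial equilibrium. For $\Delta A_1$, the true price is $P_1=\phi(A_0+\Delta A_1)$, while the forecast is $\hat{P}_1=\phi(A_0)+\phi_A(A_0)\Delta A_1$. By Taylor's theorem with Lagrange remainder, there exists $\xi$ between $A_0$ and $A_1$ such that
\begin{eqnarray}
P_1-\hat{P}_1=\tfrac{1}{2}\phi_{AA}(\xi)\Delta A_1^2.\nonumber
\end{eqnarray}
Since $\Delta A_1^2>0$ and assumption (\ref{e1}) allows $\phi_{AA}$ to take either sign, the sign of $P_1-\hat{P}_1$ is correspondingly ambiguous, giving the stated relation $\hat{P}_1\gtreqqless P_1$.

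There is no real obstacle here; the argument is essentially algebraic. The only point requiring care is the last step, where one must be explicit that the ambiguity in $\hat{P}_1$ versus $P_1$ is a direct consequence of the fact that $\phi_{AA}$ is permitted to be positive, negative, or zero in the model's primitives, rather than of any deeper feature of the approximation.
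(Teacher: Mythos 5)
Your proof is correct and follows essentially the same route as the paper, which obtains both roots by factoring $\tau\Delta A^2+(1-\phi_A)\Delta A=0$ and leaves the price-comparison claims to informal discussion. Your explicit Lagrange-remainder step, $P_1-\hat{P}_1=\tfrac{1}{2}\phi_{AA}(\xi)\Delta A_1^2$ with the sign ambiguity inherited from $\phi_{AA}\gtrless0$, is a welcome tightening of the $\hat{P}_1\gtreqqless P_1$ claim that the paper asserts without argument.
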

Both equilibria in Proposition \ref{p2} are self-fulfilling. In
the perfect foresight equilibrium, no firm deviates from the
equilibrium supply $A=A_0$, and thus there is no need for agents
to rely on polynomial approximations: Producers \emph{know} the
price $\phi(A_0)$. The opposite is the case in the second
equilibrium: Once firms supply $A_1\neq A_0$, they are uncertain
as to the equilibrium price, $\phi(A_1)=\phi(A_0+\Delta A_1)$,
which they can only approximate as $\phi(A_0)+\phi_A(A_0)\Delta
A$. Moreover, the error of this approximation, $(A-A_0)^2$, grows
the more agents deviate from supplying $A_0$. That is, once firms
deviate from the rational expectations equilibrium, they find it
harder to estimate future prices and thus they are incentivised to
deviate even further until a new equilibrium is reached. In this
equilibrium, firms cannot forecast prices accurately, and thus
they choose to produce a small number of goods, at a low marginal
cost, which provides a margin of safety.

As we argued earlier, this interdependence between aggregate
output and the individual firm's ability to understand the
environment that it operates in is a crucial aspect in most
crises: Once consumers and investors change their behavior, they
find themselves in an environment that is hard to understand, and
they hold back on investment and consumption decisions waiting for
the ``dust to settle''. In the current interpretation, by cutting
production, agents put \emph{themselves} into ``uncharted
territory''. This aspect is, by assumption, not captured in
environments where agents can compute the entire demand function,
respectively, solve the model as in Lemma \ref{p1}. Before we
discuss the scope for government to correct such ``glitches" in
output, which turns out to be limited, we make one remark: The
model's coefficients $\phi_A(A_0), \tau$ can be chosen such that
$\Delta A_1=A_1-A_0=-\frac{(1-\phi_A)}{\tau}<0$ is arbitrarily
small. That is, both equilibria in Proposition \ref{p2} exist even
if the the first-order Taylor-series approximation (\ref{hat}) is
of very good quality, i.e., if the
error term is of order $\mathcal{O}(\Delta A^2)$. 

\subsection{Parameter Changes}\label{s3}

We augment demand $P=\phi(A;b)$ to incorporate an exogenous
parameter $b$. This parameter is assumed to increase demand
$\phi_b=\phi_b(A;b)>0$ for every $A$. This parameter may be seen
as government demand or money supply.\footnote{Alternatively, as
we discuss in Section \ref{ss16}, the model may be interpreted as
the capital market of an overlapping generations economy, where
$a_i$, $A$, $\phi()$ are, respectively, individual savings and
aggregate savings, and $\phi(A)$ is the marginal product of
capital that agents \emph{expect} to receive on their savings.
Finally, $b$ may be seen as public debt and $A_0$ as steady state
capital.} In this interpretation, the following section identifies
the multiplier effect that obtains once agents need to rely on
approximations to anticipate the consequences of policy
interventions.

We begin with a benchmark model where agents are computationally
unconstrained. Second, we study the model with friction. Comparing
both settings shows that parameter increases, which increase
demand and equilibrium output in a model with unconstrained
agents, can reduce economic activity if firms are computationally
constrained. Put differently, parameter changes, in particular if
they are large, can put agents into ``uncharted territory", and
incentivise them to cut, rather than increase, output.

\subsubsection{Comparative statics without friction}\label{nof}
Recalling our augmented demand function:
\begin{eqnarray} P=\phi(A;b_0), \quad \phi_A<0, \quad
\phi_{AA}\gtrless0, \quad \phi_b>0, \quad
\phi(0;b)>0,\label{be1}\end{eqnarray} firms can anticipate the
equilibrium price $P$ correctly, if they are computationally
unconstrained as in Lemma \ref{p1}. Hence, they choose a
production schedule $a_i^*$ which maximizes profits
$a^*_i=\underset{a_i}{\arg\max}\Big\{\pi_i=a_iP-\frac{1}{2}a_i^2\Big\}$.
Aggregate supply is thus
\begin{eqnarray}
A=\int_{[0,1]}a^*_idi=P.\label{be33}\end{eqnarray} Taken together
(\ref{be1}) and (\ref{be33}) yield a unique equilibrium $P_0,A_0$
for \emph{every} given exogenous parameter $b_0$. Once the
parameter changes from $b_0$ to $b_1=b_0+\Delta b$, the price is
again correctly anticipated as the unique solution $P_1,A_1$ to
the equations $P=A$ and $P=\phi(A;b_0+\Delta b)$.

How would an actual human being, or an economic researcher, try to
think about the impact of the parameter change? The outside
researcher, who uses textbook methods to study how changes in the
exogenous parameter from $b_0$ to $b_1$ change output and price,
cannot compute $A$ and $P$ explicitly. Instead, he will
approximate the model's comparative statics. That is, he will
differentiate (\ref{be1}) and (\ref{be33}):
\begin{eqnarray} \Delta P\approx\phi_A(A_0;b_0)\Delta A+\phi_b(A_0;b_0)\Delta b ,\quad \Delta A=A-A_0,\quad \Delta b=b_1-b_0,\label{de41.1}\end{eqnarray}
\begin{eqnarray} \Delta A=\Delta P.\label{be4}\end{eqnarray}
Combining (\ref{de41.1}) and (\ref{be4}) yields the model's
comparative statics:
\begin{lemma}\label{L2} Exogenous parameter variations $\Delta b$ change output (and price) according to
$\frac{\Delta A}{\Delta b}\approx\frac{\phi_b}{1-\phi_A}>0$ and
$\frac{\partial A}{\partial b}=\underset{\Delta b\rightarrow
0}{\lim}\frac{\Delta A}{\Delta
b}=\frac{\phi_b}{1-\phi_A}>0$.\end{lemma}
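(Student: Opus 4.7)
The plan is to obtain the comparative statics by algebraically eliminating $\Delta P$ from the linearized market clearing system $(\ref{de41.1})$--$(\ref{be4})$, and then to justify that the same expression appears as the exact derivative via the implicit function theorem applied to the equilibrium condition $A=\phi(A;b)$.

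First I would substitute the supply relation $\Delta A=\Delta P$ from $(\ref{be4})$ into the linearized demand relation $(\ref{de41.1})$, yielding $\Delta A\approx \phi_A(A_0;b_0)\Delta A+\phi_b(A_0;b_0)\Delta b$. Rearranging gives $(1-\phi_A)\Delta A\approx \phi_b\Delta b$, and since the sign restrictions in $(\ref{be1})$ guarantee $\phi_A<0$, the coefficient $1-\phi_A$ is strictly positive (in fact greater than one), so I may divide to obtain $\frac{\Delta A}{\Delta b}\approx \frac{\phi_b}{1-\phi_A}$. Positivity then follows immediately because both $\phi_b>0$ and $1-\phi_A>0$.

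For the exact derivative, I would invoke the implicit function theorem on the equilibrium equation $F(A,b):=A-\phi(A;b)=0$, which defines $A$ implicitly as a smooth function of $b$ in a neighbourhood of $(A_0,b_0)$ since $F_A=1-\phi_A\neq 0$ and $\phi$ is $C^2$ by $(\ref{be1})$. Differentiating implicitly delivers $\frac{\partial A}{\partial b}=-\frac{F_b}{F_A}=\frac{\phi_b}{1-\phi_A}$, and passing to the limit $\Delta b\to 0$ in the approximate relation shows that the approximation in the first claim is in fact an equality in the limit, matching the exact derivative just computed.

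There is no real obstacle here: the statement is essentially the content of the implicit function theorem applied to a one-dimensional equation whose non-degeneracy is built into the monotonicity assumption $\phi_A<0$. The only thing to be careful about is being explicit that the symbol ``$\approx$'' in the first formula refers to the first-order linearization used by the (unconstrained) researcher, while the exact derivative statement is a genuine equality; both are consistent because the neglected terms in $(\ref{de41.1})$ are $\mathcal{O}(\Delta A^2,\Delta b^2)$ and therefore vanish faster than $\Delta b$ as $\Delta b\to 0$.
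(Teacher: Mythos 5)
Your proposal is correct and follows essentially the same route as the paper, which obtains the result simply by substituting $\Delta A=\Delta P$ from (\ref{be4}) into the linearized demand relation (\ref{de41.1}) and dividing by $1-\phi_A>0$. Your additional appeal to the implicit function theorem merely makes explicit the standard justification for the exact-derivative claim that the paper leaves implicit, so there is nothing to add.
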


That is, an outside observer/analyst would use a polynomial
expansion of $A=\phi(A;b), P=\phi(A;b)$ to \emph{approximate} the
impact of an exogenous parameter change as in Lemma \ref{L2}. In
the following section, we assume that firms themselves make such
``polynomial inference" using such an approximation to anticipate
the consequences of parameter changes.

\subsubsection{Comparative statics with computationally constrained
agents}\label{con} Using a first-order expansion, around
$A_0=\phi(A_0;b_0)$, agents approximate the equilibrium price:
\begin{eqnarray} \hat{P}=\phi_0(A_0,b_0)+\phi_A\Delta A+\phi_b\Delta b,\quad \Delta A=A-A_0,\quad \Delta b=b_1-b_0. \label{ne22}\end{eqnarray}
That is, agents have to incorporate two aspects in their demand
forecast: (i) the direct effect of the parameter change $\Delta b$
and (ii) the equilibrium response of all agents who deviate
$\Delta A\neq 0$ from their usual supply choice. As before, agents
discount the price estimate since they do not know how curvature
terms of demand $\phi_{AA}$, $\phi_{bb}$, and $\phi_{Ab}$ affect
prices:\footnote{In Appendix \ref{A1} we discuss an alternative
error term $\max[\Delta A^2,\Delta b^2,\Delta A\Delta b]$, where
agents are either concerned about miscalculating the parameter
change's impact, $\Delta b>\Delta A$, or the other agents'
reaction $\Delta A>\Delta b$ to the parameter change.}
\begin{eqnarray}&&\pi_i=a_i\hat{P}-a_i(\tau_1 \Delta A^2+\tau_2\Delta b^2+\tau_3|\Delta
A||\Delta b|)-\frac{1}{2}a_i^2\nonumber\\
&&a_i^*=\hat{P}-\tau_1 \Delta A^2-\tau_2\Delta b^2-\tau_3|\Delta
A||\Delta b|\label{ne24}\quad \tau_i\geq 0, \quad i=1,2,3.
\end{eqnarray}

To find the equilibria associated with (\ref{ne22}) and
(\ref{ne24}), it is useful to distinguish cases where $\Delta
A\geq0$ from cases where $\Delta A\leq0$.

We begin by looking for equilibria where $\Delta A\geq0,\Delta
b\geq0$. If $\Delta A\geq0$ and $\Delta b\geq0$ then supply equals
approximate demand (\ref{ne24}), when:
\begin{eqnarray} A=\phi_0(A_0,b_0)+\phi_A\Delta A+\phi_b\Delta b-\tau_1 \Delta A^2-\tau_2\Delta b^2-\tau_3\Delta
A\Delta b, \nonumber\end{eqnarray} and thus:
\begin{eqnarray}\label{ne26} \Delta A_{1,2}=-\frac{1-\phi_A+\tau_3\Delta b}{2\tau_1}
\pm\sqrt{(\phi_b-\tau_2\Delta b)\frac{1}{\tau_1}\Delta
b+\Big(\frac{1-\phi_A+\tau_3\Delta b}{2\tau_1}\Big)^2}.
\end{eqnarray} Combining the two equilibrium candidates in (\ref{ne26}) with our initial assumption $\Delta
A\geq0$, we have:
\begin{lemma}\label{L3} If and only if $\Delta
b<\frac{\phi_b}{\tau_2}$, there exists an equilibrium in which,
compared to the perfect foresight equilibrium, production (and
price) are increased:\\ $\Delta A_{1}=-\frac{1-\phi_A+\tau_3\Delta
b}{2\tau_1}+\sqrt{(\phi_b-\tau_2\Delta b)\frac{1}{\tau_1}\Delta
b+\Big(\frac{1-\phi_A+\tau_3\Delta b}{2\tau_1}\Big)^2}>0$. At the
margin, increases in the parameter increase income if
$\frac{\partial\Delta A}{\partial \Delta
b}=-\frac{\tau_3}{2\tau_1}+1/2\frac{\phi_b\frac{1}{\tau_1}-2\frac{\tau_2}{\tau_1}\Delta
b+2\frac{\tau_3}{\tau_1}\Big(\frac{1-\phi_A+\tau_3\Delta
b}{2\tau_1}\Big)}{\sqrt{(\phi_b-\tau_2\Delta
b)\frac{1}{\tau_1}\Delta b+\Big(\frac{1-\phi_A+\tau_3\Delta
b}{2\tau_1}\Big)^2}}>0$.
\end{lemma}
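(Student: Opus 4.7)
The plan is to collapse the market-clearing condition into a single quadratic in $\Delta A$, extract the admissible root, read off the existence condition from its sign, and then differentiate. First I will use the working hypothesis $\Delta A\ge 0,\ \Delta b\ge 0$ to replace $|\Delta A||\Delta b|$ in (\ref{ne24}) by $\Delta A\,\Delta b$. Substituting the forecast (\ref{ne22}) into the aggregate of (\ref{ne24}), imposing market clearing $A=A_0+\Delta A$, and cancelling $A_0=\phi(A_0;b_0)$ via Lemma \ref{p1} yields
\begin{equation*}
\tau_1\Delta A^{2}+(1-\phi_A+\tau_3\Delta b)\Delta A-(\phi_b-\tau_2\Delta b)\Delta b=0,
\end{equation*}
and the two candidates in (\ref{ne26}) are simply the roots of this quadratic.

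Next I will identify the unique root compatible with $\Delta A\ge 0$. Writing $C:=(1-\phi_A+\tau_3\Delta b)/(2\tau_1)$ and $E:=(\phi_b-\tau_2\Delta b)\Delta b/\tau_1$, the hypothesis $\phi_A<0$ together with $\tau_3,\Delta b\ge 0$ gives $C>0$, so the minus-sign root $-C-\sqrt{C^{2}+E}$ is strictly negative whenever real and can be discarded. Only $\Delta A_1=-C+\sqrt{C^{2}+E}$ survives, and since $C>0$ the chain of equivalences $\Delta A_1>0\iff\sqrt{C^{2}+E}>C\iff E>0$ is immediate by squaring. For $\Delta b>0$ the last inequality is exactly $\Delta b<\phi_b/\tau_2$, giving the stated if-and-only-if. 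Positivity of $E$ also forces $C^{2}+E>0$, so the discriminant is real and no separate existence check is needed.

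For the comparative statics I will just differentiate $\Delta A_1=-C(\Delta b)+\sqrt{C(\Delta b)^{2}+E(\Delta b)}$ via the chain rule, using $C'(\Delta b)=\tau_3/(2\tau_1)$ and $E'(\Delta b)=(\phi_b-2\tau_2\Delta b)/\tau_1$. Collecting,
\begin{equation*}
\frac{d\Delta A_1}{d\Delta b}=-\frac{\tau_3}{2\tau_1}+\frac{E'(\Delta b)+2C(\Delta b)C'(\Delta b)}{2\sqrt{C^{2}+E}},
\end{equation*}
which after re-expanding $C$ and $E$ in the primitives reproduces the displayed formula; the positivity claim is then just the assertion that this expression exceeds zero. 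The only genuinely delicate step is the root-selection argument: I must confirm that $\Delta A_2$ is excluded unconditionally, not merely by the case hypothesis. Since $C>0$ uniformly in the parameters, $\Delta A_2\le -C<0$ whenever it is real, so the argument is airtight and everything else reduces to routine algebra.
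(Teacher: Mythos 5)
Your argument is correct and is exactly the route the paper takes: the paper derives the quadratic in $\Delta A$ and its roots (\ref{ne26}) in the text immediately preceding the lemma and then states only that the result ``follows directly,'' so your root selection (since $C=\frac{1-\phi_A+\tau_3\Delta b}{2\tau_1}>0$, the minus root is strictly negative whenever real), the equivalence $\Delta A_1>0\iff E>0\iff\Delta b<\phi_b/\tau_2$ for $\Delta b>0$, and the chain-rule differentiation merely fill in the steps the paper omits. One caveat: your (correct) computation yields the term $2CC'=\frac{\tau_3}{\tau_1}C$ in the numerator of the derivative, whereas the lemma's displayed formula contains $2\frac{\tau_3}{\tau_1}\big(\frac{1-\phi_A+\tau_3\Delta b}{2\tau_1}\big)$, i.e.\ a spurious extra factor of $2$ on that term --- your version, not the displayed one, is consistent with the paper's own subsequent claim that the derivative equals $\frac{\phi_b}{1-\phi_A}$ at $\Delta b=0$, so your remark that the expression ``reproduces the displayed formula'' is not literally accurate, though the error lies in the paper's typesetting rather than in your proof.
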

\begin{proof} Follows directly from (\ref{ne26}).
\end{proof}

To interpret the equilibrium in Lemma \ref{L3}, we study how it
corresponds to the REE of Proposition \ref{p2}. That is, we note
that $\underset{\Delta b\rightarrow 0}{\lim}\Delta A_{1}(\Delta
b)=0$, i.e., the equilibrium quantity $A_1$ converges to the REE
quantity  $A_0$ as $b\rightarrow b_0$. On the contrary, for large
changes $\Delta b>0$, the equilibrium loses its RE character as
agents do not precisely know how demand is impacted by the
exogenous change. Such large parameter changes have an ambiguous
effect on output, which is captured by the term
$(\phi_b-\tau_2\Delta b)\Delta b$. On the one hand, agents
extrapolate the increase in demand $\phi_b\Delta b>0$. At the same
time, agents cannot rule out that too large an increase might
eventually prove counterproductive $-\tau_2\Delta b^2$. That is,
large, unprecedented changes in the model's structure render
agents' polynomial approximations unreliable, and put them into
``uncharted territory".

Lemma \ref{L3} thus features four policy regimes. First, if the
parameter change is (infinitesimally) small, the agents'
polynomial approximations are of high quality. In this regime,
policy is as effective as in the model of Section \ref{nof}, Lemma
\ref{L2}, where agents are computationally unconstrained. That is,
the model's multiplier is given by $\frac{\partial\Delta
A}{\partial \Delta b}_{|\Delta b=0}=\frac{\partial A}{\partial
b}=\frac{\phi_b}{1-\phi_A}>0$. Second, there is an intermediate
region where $\Delta b\in[0,\Delta b_1]$, and policy changes have
a positive effect at the margin, $\frac{\partial \Delta
A}{\partial \Delta b}>0$. These marginal returns, however, are
diminishing. Third, there is a region $\Delta b\in[\Delta
b_1,\Delta b_2]$ where agents find themselves in ``uncharted
territory" and start to cut production $\frac{\partial \Delta
A}{\partial \Delta b}<0$. Finally, in the extreme case where
$\Delta b\geq \frac{\phi_b}{\tau_2}$, an equilibrium $\Delta A>0$
cannot exist.


Regarding the remaining equilibria, where $\Delta A\leq0$, we
recall (\ref{ne22}) and (\ref{ne24}), and note that $-|\Delta
A|=\Delta A$. Accordingly, there are two candidates
\begin{eqnarray}\label{ne27} \Delta A_{2,3}=-\frac{1-\phi_A-
\tau_3\Delta b}{2\tau_1}\pm\sqrt{(\phi_b-\tau_2\Delta
b)\frac{1}{\tau_1}\Delta b+\Big(\frac{1-\phi_A-\tau_3\Delta
b}{2\tau_1}\Big)^2}.
\end{eqnarray}
In view of (\ref{ne27}), if $\Delta b<\frac{\phi_b}{\tau_2}$,
there exists exactly one equilibrium, in which $\Delta
A_{2}=-\frac{1-\phi_A- \tau_3\Delta
b}{2\tau_1}-\sqrt{(\phi_b-\tau_2\Delta b)\frac{1}{\tau_1}\Delta
b+\Big(\frac{1-\phi_A-\tau_3\Delta b}{2\tau_1}\Big)^2}<0$ such
that economic activity is lower than in the REE. For large
parameter changes, $\Delta b>\frac{\phi_b}{\tau_2}$, there can
exist up to two equilibria in which economic activity is low.
Combining these observations with Lemma \ref{L3} yields:

\begin{coro} Large parameter changes $\Delta b>\frac{\phi_b}{\tau_2}$
preclude the existence of equilibria with output $A\geq
A_0$.\end{coro}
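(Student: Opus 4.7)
The plan is to dispose of the corollary by combining the characterization already established in Lemma~\ref{L3} with a short check on the candidates arising under the opposite sign assumption. Recall that, owing to the $|\Delta A|$ in the profit criterion (\ref{ne24}), fixed-point conditions split into two cases: assuming $\Delta A\geq 0$ gives the quadratic in (\ref{ne26}) with roots $\Delta A_{1,2}$, and assuming $\Delta A\leq 0$ gives the quadratic in (\ref{ne27}) with roots $\Delta A_{2,3}$. To establish the corollary, I would verify that in the regime $\Delta b>\phi_b/\tau_2$ neither quadratic supplies a valid equilibrium with $\Delta A\geq 0$.

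First I would treat the $\Delta A\geq 0$ case. Lemma~\ref{L3} already gives the ``only if'' direction: the root $\Delta A_1>0$ in (\ref{ne26}) exists precisely when $\Delta b<\phi_b/\tau_2$, so under the hypothesis $\Delta b>\phi_b/\tau_2$ this candidate is ruled out. The remaining root $\Delta A_2$ from (\ref{ne26}) equals $-\tfrac{1-\phi_A+\tau_3\Delta b}{2\tau_1}$ minus a non-negative square root, and since $\phi_A<0$ implies $1-\phi_A+\tau_3\Delta b>0$, this root is strictly negative whenever it is real. To close off the boundary, I would observe that $\Delta A=0$ substituted directly into the market-clearing condition that produced (\ref{ne26}) yields $\phi_b\Delta b-\tau_2\Delta b^2=0$, whose only positive solution is $\Delta b=\phi_b/\tau_2$, which is excluded by the strict inequality.

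Next I would handle the $\Delta A\leq 0$ case, which is essentially bookkeeping: by construction, every root of (\ref{ne27}) that constitutes an equilibrium must satisfy the assumption $\Delta A\leq 0$ under which that quadratic was derived, so any root violating this is spurious and discarded. The text's remarks following (\ref{ne27}) already note that for $\Delta b>\phi_b/\tau_2$ at most two equilibria arise from this branch, each with $\Delta A<0$; in the write-up I would simply formalize this by showing that whenever a root of (\ref{ne27}) is actually non-negative the sign-assumption under which it was derived fails, so it is not an equilibrium of the original system. Combining both cases yields the claim.

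The step I expect to take the most care is the disposal of the boundary $\Delta A=0$, since the quadratic (\ref{ne26}) has been written in a factorized-looking form, and one must return to the underlying supply/demand equation (rather than to (\ref{ne26})) to rule out $\Delta A=0$ cleanly when $\Delta b>\phi_b/\tau_2$. Everything else is an immediate consequence of Lemma~\ref{L3} and the sign conventions fixed by the case split.
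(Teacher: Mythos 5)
Your proposal is correct and follows essentially the same route as the paper, which derives the corollary by "combining these observations with Lemma \ref{L3}": the only-if direction of Lemma \ref{L3} eliminates the positive root of (\ref{ne26}), the sign of $-\frac{1-\phi_A+\tau_3\Delta b}{2\tau_1}$ eliminates the other, and the roots of (\ref{ne27}) are admissible only when $\Delta A\leq 0$ by construction. Your explicit check that $\Delta A=0$ solves the market-clearing condition only at $\Delta b\in\{0,\phi_b/\tau_2\}$ is a small but welcome addition that the paper leaves implicit.
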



\section{Extensions}

So far agents were assumed to know the rational expectations
equilibrium $A_0,\phi(A_0)$. In Section \ref{s5}, we study a
dynamic setting where agents learn different points on the demand
curve over time. In turn, we examine how the economy converges to
the REE. Second, in our baseline setting, all agents know the same
point on the demand curve. Price forecasts and supply decisions
are therefore the same across agents. Once different agents know
different pieces of the demand curve, this is no longer true.
Rather than knowing each other's price forecasts and supplies,
agents have to estimate price \emph{and} supply simultaneously. In
Section \ref{s4}, we extend our model to incorporate such
asymmetric information in a manner which is akin to Bayesian
inference.

\subsection{Learning}\label{s5}

We abstracted from the fact that agents may learn from past
mistakes, i.e., suboptimal production choices that were based on
incorrect price estimates. One would imagine that they memorize
these mistakes, or the observation that a quantity $A_1$ is
associated with an observable price $P_1=\phi(A_1)$. Under our
current assumptions on the demand function, this price differs
from the estimated price $\hat{P_1}$. Hence, agents would not
supply $A_1$ again. Second, if agents are computationally
constrained to the use of polynomials, how do they find the
perfect foresight equilibrium in the first place? This section's
main observation is that agents will learn the REE over
time.


Agents who sell repeatedly into the market will, over time
$t=0,1,2,3...$, observe an increasing number of points on the
demand curve. Regarding these points, $P_t=\phi(A_t)$, we assume
that agents also learn demand's slope $\phi_A(A_t)$ once a
quantity $A_t$ is marketed. Given past observations $A_t,
t=0,1,2,3,...,T$ agents can refine their price estimate as:
\begin{eqnarray} \hat{P}_{T+1}=\phi(A^*)+\phi_A(A^*)(A_{T+1}-A^*), \quad A^*=\underset{A_t}{\arg\min}\Big\{|A_{T+1}-A_t|\Big\}
 \quad t=0,1,2,3...T \label{l1}\end{eqnarray}
That is, to estimate prices, they select from the set of known
points $\{A_t\}_{t=0}^{T}$ the point $A^*$, which is closest to
the future supply $A_{T+1}$. Put differently, they use the
observation $A^*$ from the past, which is most similar/closest to
the situation they are trying to make inference on. In turn,
agents $i$ choose supply
\begin{eqnarray} a_i=\hat{P}_{T+1}-(A_{T+1}-A^*)^2.\label{l11} \end{eqnarray}
Hence, for a given $A^*$, there are two equilibrium candidates
\begin{eqnarray} A_{T+1}=A^*-\frac{1-\phi_A(A^*)}{2}\pm\sqrt{(\phi(A^*)-A^*)+\Big(\frac{1-\phi_A(A^*)}{2}\Big)^2}. \label{l2}\end{eqnarray}

To show that (\ref{l1}) and (\ref{l2}) ensure that agents learn
the REE equilibrium $A_0,\phi(A_0)$, we proceed in two steps.
First, we study the case where demand $\phi()$ is a convex
function. In this case, convergence to the REE can be studied in
terms of a simple first-order difference equation. Second, for the
remaining cases, we give an indirect argument in Appendix
\ref{ALR}.

\subsubsection{Convex Demand}\label{cd} Without loss of generality, we
assume that agents start with a prior $\mu,\phi(\mu)$, $\mu<A_0$.
Moreover, we focus on the ``+" roots of (\ref{l2}). For convex
demand, we now show that (\ref{l1}) and (\ref{l2}) imply a
first-order difference equation for supply:
\begin{eqnarray} A_{T+1}=A_T-\frac{1-\phi_A(A_T)}{2}+\sqrt{(\phi(A_T)-A_T)+\Big(\frac{1-\phi_A(A_T)}{2}\Big)^2}. \label{l244}\end{eqnarray}
First, we note that (\ref{l244}) indicates that agents, in
Marshallian fashion, increase supply, such that $A_{T+1}>A_T$, if
the marginal revenue $\phi(A_T)$ exceeds the marginal cost of
production $A_T$. Moreover, from Lemma \ref{p1}, we know that
there exists only one level of supply, namely $A_0$, where
$A=\phi(A)$. It follows that (\ref{l244}) has a unique steady
state at the point where $A_{T+1}=A_T=A_0$. This steady state
equilibrium is stable due to our assumption that $\phi$ is
downward-sloping: For $A_T<A_0$, we have $\phi(A_T)-A_T>0$ and
thus $A_{T+1}>A_T$. At $A_0$, the system is locally stable since
$\frac{\partial A_T}{\partial A_{T+1}}=_{|A_T=A_0}0\in(-1,1)$. To
complete the argument, we note that the sequence
$\{A_t\}_{t=1}^{T}$ is strictly increasing, and, due to our
convexity assumption on $\phi$, that $A_t\leq A_0\forall
t=0,1,2....T$.\footnote{To see this, recall (\ref{l1}) and
(\ref{l11}), which imply
$A_{T+1}=\phi(A_T)+\phi_A(A_T)(A_{T+1}-A_T)-(A_{T+1}-A_T)^2\leq
\phi(A_T)+\phi_A(A_T)(A_{T+1}-A_T)$. At the same time, convexity
of $\phi$ implies: $\phi(A_{T+1})=\phi(A_T+A_{T+1}-A_T)\geq
\phi(A_T)+\phi_A(A_T)(A_{T+1}-A_T)$. Taking both inequalities
together, we have $A_{T+1}-\phi(A_{T+1})\leq 0$, respectively,
$A_{T+1}\leq A_0$. Where $A_{T+1}\leq A_0$, follows from $\phi$
being downward-sloping and $A=\phi(A)$ at $A=A_0$. Hence, if we
start at a point $\mu-\phi(\mu)<0$, this implies that $A_{T+1}\leq
A_0\forall T$. Finally, as mentioned before, if demand is
non-convex, $A_t$ can overshoot $A_0$. In that case we require an
additional argument, which we give in Appendix \ref{ALR}.} That
is, $A_T$ always adjusts towards, but not beyond, $A_0$. Hence,
according to (\ref{l1}), agents will always use the information
that they learned in the previous period, when a quantity $A_{T}$
was marketed, to think about $A_{T+1}$. This last property allows
us to study convergence in terms of the first-order difference
equation (\ref{l244}).


\subsection{Asymmetric Information}\label{s4}

One may suspect that heterogeneity in information might mitigate
the possibility of multiple equilibria that we
emphasize.\footnote{See, e.g., \citet{Mor98} for a coordination
problem where asymmetric information selects unique equilibria in
an economy with a continuum of players.} Moreover, one might
expect that dispersion of private information induces some agents
to supply too little and others too much such that, on average,
errors cancel and supply might actually be at an efficient
level.\footnote{See \citet{Gall07}, and \citet{Gro76} for such
wisdom of the crowd effects.} Regarding these conjectures, we find
that (i) multiplicity carries over to the case with dispersed
private information and (ii) that dispersed information tends to
amplify (dampen) supply if demand is concave (convex).

Each agent $i\in[0,1]$ is assumed to know the selling price
$\phi(A_i)$ and demand's first derivative $\phi_A(A_i)$ of a
particular supply $A_i\in[0,\infty]$. Agents are distributed over
these points according to an integrable density function $f()$.
For simplicity, we normalize agents' discount rate to $\tau=1$.

Conditional on information $A_i,\phi(A_i),\phi_A(A_i)$ agent $i$
supplies:
\begin{eqnarray} a^*_i|A_i=\hat{P}|A_i-(\hat{A}|A_i-A_i)^2, \label{ha1} \end{eqnarray}
where $\hat{P}|A_i$ and $\hat{A}|A_i$ are agent $i's$ price and
supply forecasts conditional on knowing demand $\phi(A_i)$ at
point $A_i$. The polynomial estimates for price and quantity are:
\begin{eqnarray} \hat{P}|A_i=\phi(A_i)+\phi_A(A_i)(\hat{A}|A_i-A_i), \label{ha2} \end{eqnarray}
and
\begin{eqnarray} \hat{A}|A_i=\int_{[0,1]}(a_j|A_j)|A_idj. \label{ha3} \end{eqnarray}
Where (\ref{ha3}) reflects that agent $i$ uses his information at
point $A_i$ to infer the information and thus the supply of the
other agents who know a different point $A_j$. That is,
\emph{agent $i$ knows that agent $j$ observes a point on the same
demand curve and thus he uses a polynomial expansion around
$\phi(A_i)$ to estimate the information $\phi(A_j),\phi_A(A_j)$
that player $j$ receives}. Based on this reasoning, $i$ can
construct an estimate for the other players' price estimates,
which he needs to calculate aggregate supply. Agent $i's$ price
and supply estimates are thus given by the simultaneous solution
of (\ref{ha2})-(\ref{ha3}). In turn, he can choose supply
(\ref{ha1}). We solve the model in Appendix \ref{A2.1} using a
guess-and-verify approach.

These solutions yield two main insights. First, as in Proposition
\ref{p2}, multiple equilibria exist due to the interaction between
agents' ability to forecast the equilibrium and aggregate supply.
Second, unlike the earlier model, where information was symmetric,
we show in Lemma \ref{Lem4} that output is depressed across
\emph{all} equilibria since agents systematically underestimate
demand if the true demand function is convex. Moreover, the
marginal cost of production differs among producers, and thus
output, in addition to being low, is produced inefficiently.


\section{Interpretations}\label{ss16} In this section, we
reinterpret our model in terms of the three macroeconomic
workhorse frameworks: (i) aggregate search models of the
\citet{Dia82} type, (ii) Life-cycle savings models of the
\citet{Dia65} type, and (iii) models of supply and demand as in
\citet{Sam09} and \citet{Mas95}.

\paragraph{Search:} In the context of the \citet{Dia82}, p. 887, model, agents face
the following choice problem:
\begin{eqnarray} \max_{a_i}\{a_i\phi(A;b_0)-f(a_i)\},\quad a_i\in\mathcal{R_+}.  \nonumber\end{eqnarray}
Where $a_i$ is individual $i's$ output choice in Period 1, and
$\phi(A;b_0)$ is the probability with which agents find a trading
partner in Period 2. If a trading partner is found, agent $i$ can
sell/exchange goods at price one. Finally, $f(a_i)$ is the cost of
production.

The chance of finding a trading partner, $\phi(A;b_0)$, is an
increasing function in aggregate economic activity
$A=\int_{[0,1]}a_idi$ and government demand $b$. \citet{Dia82}
shows that such an economy can have multiple REE equilibria, which
we call, say, $A_0, A_1$. Suppose now that agents know one of
these REE, e.g., $A_0$; then, if they are computationally
constrained, as in the present paper, they would need to use a
polynomial expansion to compute the probability $\phi(A_0+\Delta
A)$ of finding a trading partner that would prevail once agents
collectively deviate $\Delta A$ from producing $A_0$. The same
applies to the evaluation of the exogenous policy parameter $b$,
which may, unlike in \citet{Dia82}, result in a negative
multiplier effect, as discussed in Section \ref{s3}.

\paragraph{Savings and Investment:} In the context of the \citet{Dia65} model, $\phi()$ may be interpreted
as a component to agents' consumption savings problem:
\begin{eqnarray} \max_{s_{t},c^1_t,c^2_{t+1}}U(c_t^1,c_{t+1}^2)
\quad s.t. \quad c^1_t=w_t-s_t, \quad
c^2_{t+1}=s_t(1+r_{t+1}),\quad c^1_t>0,
c^2_{t+1}>0,\label{3}\end{eqnarray} where factor prices are
functions of the prevailing capital-labor ratios $k_t$ and
$k_{t+1}$:
\begin{eqnarray} r_{t+1}=f'(k_{t+1}),\quad w_t=f(k_t)-f'(k_t)k_t.\nonumber\end{eqnarray}
To make choice (\ref{3}), agents have to form expectations
regarding equilibrium interest $r_{t+1}=f'(k_{t+1})$. In
equilibrium, the life-cycle savings condition, $(1+n)k_{t+1}=s_t$,
relates savings and capital; $n$ representing the exogenous rate
of population growth.

Suppose now that the economy is initially in a steady state at
$k_0,s_0$. To compute future interest rates, agents have to
compute $r_{t+1}=f'(k_{t+1})=f'(\frac{s_t}{1+n})=:\phi(s_t;n)$.
Once again, if agents know the prevailing interest in the steady
state, they have to engage in polynomial expansions to form price
expectations $\hat{r}_{t+1}$ to compute the interest rate that
obtains in the (temporary) equilibria that obtain once agents
choose savings $s_{t}=s_0+\Delta s_{t}$.

This argument extends to the case where agents supply labor in
both periods. In that case, to make their savings decision, agents
have to (i) approximate the interest rate and (ii) the
second-period wage rate. That is, they have to approximate the
\citet{Sam62} neoclassical factor-price frontier,
$w_{t+1}=\xi(r_{t+1}(k_{t+1}(s_t)))$, which relates wages to
interest, interest to the capital intensity, and finally the
capital intensity to savings.

\paragraph{Supply and Demand:} Our lead interpretation was that of a simplified $A^D,A^S$
setting. Taking this perspective, we suggest one reason why demand
analysis may be computationally complicated for firms. Suppose
demand is given by $A^D=\xi(A;L(A))$, where $\xi$ represents
demand, which is downward-sloping in supply $\frac{\partial
\xi}{\partial A}<0$. At the same time, cuts in production $A$
reduce employment $L(A)$ and demand, i.e., $\frac{\partial
\xi}{\partial L}\frac{\partial L}{\partial A}>0$. Accordingly,
agents grapple with the question whether the function
$\phi(A):=\xi(A;L(A))$ is upward or downward sloping once supply
falls into regions which they do not know from past experience.

Similarly, suppose that demand concerns a vector
$\textbf{A}\in\mathcal{R}^L$ of goods, involving substitutes and
complements, as in \citet{Mas95}, pp. 599-641. Suppose that a firm
produces a particular good $a_l$, then, if all other firms in the
economy, supplying the various other goods, change their behavior,
it has to analyze how changes in the supplies and prices for the
other goods $L\setminus l$, influence demand, and thus price, for
good $l$. In turn, if this firm is exceptionally well-informed, it
might know the entries of the economy's Jacobian matrix. However,
it need not know demand's second- and cross-derivatives, which
once again makes it difficult to compute demand correctly.




\section{Conclusion}\label{s7}

Economists' forecasting record suggests that it is difficult to
compute future economic events. The current model recognizes this
and assumes that agents cannot compute exact numeric values for
future equilibrium outcomes such as prices.

The model's key feature is that the precision with which agents
can approximate future equilibrium prices depends on the level of
aggregate economic activity, and is thus endogenous. This
interdependence between aggregate output and an individual firm's
ability to forecast the price at which it can sell its output
gives rise to equilibria in which economic activity is
inefficiently low. Such equilibria may be interpreted as
``glitches" of the overall economy. During such a glitch, agents
collectively reduce economic activity. This change in behavior
makes it difficult to forecast the resulting equilibrium, which,
in turn, justifies the initial output cut. For similar reasons we
also find that the scope for government to correct such
``glitches" in output is limited: Interventions, which would
unambiguously increase output in a frictionless economy, can make
it harder for firms to predict future equilibria and reduce output
even further. Our model therefore captures the common place
observation that large parameter changes, such as unprecedented
quantitative easing, can put agents into ``uncharted territory".


The particular form in which equilibria obtain depends on the
assumption that agents use the same Taylor-series expansions that
an economic researcher, who applies standard textbook methods,
would use. More sophistication on the part of agents will
undoubtedly change the specific form and number of equilibria.
However, it appears unlikely that the precision with which future
equilibrium outcomes are approximated can ever be entirely
independent of the overall level of economic activity, which is
what our findings rely on.

\newpage
\begin{appendix}

\section{Second-order Expansions}\label{A2} In this appendix, we derive our results for a setting where agents know of demand's
first and second derivatives. They can thus use second-order
expansions to estimate prices:
\begin{eqnarray} \hat{P}=\phi(A_0)+\phi_A\Delta A+\frac{1}{2}\phi_{AA}\Delta A^2, \label{sec1} \end{eqnarray}
We study the equilibria that emerge once agents are averse
$\tau>0$ to the third-order error. Setting $\tau=0$, we obtain the
equilibria that emerge if agents are indifferent regarding errors.
Using the estimate (\ref{sec1}), firms choose a profit-maximizing
quantity:
\begin{eqnarray} a^*_i=\underset{a_i}{\arg\max}\Big\{\pi=a_i\hat{P}-\frac{1}{2}a_i^2-a_i\tau|\Delta A^3|\Big\},\quad \tau \geq 0.\nonumber  \end{eqnarray}
Individual and aggregate supply are thus
\begin{eqnarray} a_i^*=\hat{P}-\tau|\Delta A|^3,\quad A=\int_{[0,1]}a_i^*di=\hat{P}-\tau|\Delta A|^3.  \label{p31} \end{eqnarray}
Combining (\ref{sec1}) and (\ref{p31}), we obtain:
\begin{eqnarray} A+\tau|\Delta A|^3=\phi(A_0)+\phi_A\Delta A+\frac{1}{2}\phi_{AA}\Delta A^2. \label{1000} \end{eqnarray}
Recalling that $A_0=\phi(A_0)$, (\ref{1000}) writes:
\begin{eqnarray} \tau|\Delta A|^3-\frac{1}{2}\phi_{AA}\Delta A^2+(1-\phi_A)\Delta A=0. \label{1001} \end{eqnarray}
If $\tau=0$ we have:
\begin{prop}\label{p3} If $\tau=0$, there exists the perfect foresight equilibrium $\Delta A_0=0$ and a second equilibrium $\Delta
A_1=\frac{1-\phi_A}{\frac{1}{2}\phi_{AA}}$.\end{prop}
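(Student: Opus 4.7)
The plan is to reduce equation (\ref{1001}) to a factorable quadratic by substituting $\tau=0$, and then read off the two roots directly. Setting $\tau=0$ in (\ref{1001}) eliminates the cubic term, leaving
\begin{eqnarray}
-\tfrac{1}{2}\phi_{AA}\Delta A^2+(1-\phi_A)\Delta A=0.\nonumber
\end{eqnarray}
Factoring out $\Delta A$ gives $\Delta A\bigl[(1-\phi_A)-\tfrac{1}{2}\phi_{AA}\Delta A\bigr]=0$, so the two solutions are $\Delta A_0=0$ and $\Delta A_1=(1-\phi_A)/(\tfrac{1}{2}\phi_{AA})$, which is exactly the claim.

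I would then briefly justify the interpretations tied to each root. For $\Delta A_0=0$ I would invoke Lemma \ref{p1}: at this point aggregate supply equals $A_0=\phi(A_0)$, so the second-order correction term $\tfrac{1}{2}\phi_{AA}\Delta A^2$ vanishes, (\ref{sec1}) collapses to $\hat{P}=\phi(A_0)=P_0$, and the REE is recovered. For $\Delta A_1$ I would substitute the value back into (\ref{sec1}) and (\ref{p31}) to confirm that supply and the polynomial price estimate are mutually consistent, i.e.\ the candidate indeed solves (\ref{1000}).

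There is essentially no obstacle here since the equation is quadratic after setting $\tau=0$; the only mild point worth flagging is that $\phi_{AA}\neq 0$ is needed for $\Delta A_1$ to be well defined (the standing assumption $\phi_{AA}\gtrless 0$ in (\ref{e1}) covers this), and that the sign of $\Delta A_1$ follows the sign of $\phi_{AA}$, so the glitch equilibrium can lie on either side of the REE depending on whether demand is convex or concave. No claim is made about uniqueness beyond these two roots because the reduced equation is quadratic in $\Delta A$ and thus admits at most two solutions.
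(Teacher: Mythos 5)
Your proposal is correct and follows exactly the route the paper takes: the paper derives equation (\ref{1001}), and Proposition \ref{p3} is read off by setting $\tau=0$ and factoring the resulting quadratic in $\Delta A$, just as you do. Your added remarks on $\phi_{AA}\neq 0$ and on the sign of $\Delta A_1$ tracking the sign of $\phi_{AA}$ are consistent with the paper's own discussion following the proposition.
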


According to Proposition \ref{p3}, economic activity in the
polynomial equilibrium exceeds activity in the perfect foresight
equilibrium if the demand function's second derivative indicates
that demand may rebound $\frac{1}{2}\phi_{AA}>0$ once supply
exceeds $A_0$. A negative second derivative $\frac{1}{2}\phi_{AA}$
depresses output in the same manner as the discount factor $\tau$
in Proposition \ref{p2}. As before, estimated demand can meet
supply more than twice if agents discount their price estimate:
\begin{prop}\label{p4} If $\tau>0$,
there exist at least two equilibria: $\Delta A_0=0$, and $\Delta
A_1=-\frac{1}{4\tau}\phi_{AA}-
\sqrt{(\frac{1}{4\tau}\phi_{AA})^2+\frac{1-\phi_A}{\tau}}<0$, in
which economic activity is depressed. If $\phi_{AA}>0$, and
$(\frac{1}{4\tau}\phi_{AA})^2>\frac{1-\phi_A}{\tau}$, there may
exist two additional equilibria where economic activity is
elevated $\Delta A_{2,3}=\frac{1}{4\tau}\phi_{AA}\pm
\sqrt{(\frac{1}{4\tau}\phi_{AA})^2-\frac{1-\phi_A}{\tau}}>0$.\end{prop}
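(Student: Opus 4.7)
The plan is to work directly from the equilibrium condition (\ref{1001}), namely
\begin{eqnarray}
\tau|\Delta A|^3-\tfrac{1}{2}\phi_{AA}\Delta A^2+(1-\phi_A)\Delta A=0,\nonumber
\end{eqnarray}
and enumerate its roots. Since each term carries a factor of $\Delta A$, the perfect foresight root $\Delta A_0=0$ is immediate. For the non-zero roots the absolute value forces a case split on the sign of $\Delta A$, which converts the cubic into two quadratics that can be solved explicitly.

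First I would handle the branch $\Delta A<0$. On this branch $|\Delta A|^3=-\Delta A^3$, so dividing (\ref{1001}) by $\Delta A$ and rearranging yields the quadratic
\begin{eqnarray}
\tau\Delta A^2+\tfrac{1}{2}\phi_{AA}\Delta A-(1-\phi_A)=0.\nonumber
\end{eqnarray}
The quadratic formula gives the two candidates $-\frac{\phi_{AA}}{4\tau}\pm\sqrt{(\frac{\phi_{AA}}{4\tau})^2+\frac{1-\phi_A}{\tau}}$. Because $\phi_A<0$ by (\ref{e1}), the term $\frac{1-\phi_A}{\tau}$ is strictly positive, so the discriminant is strictly positive and the radical strictly exceeds $|\frac{\phi_{AA}}{4\tau}|$. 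Consequently the ``$-$'' branch is negative and the ``$+$'' branch is positive; only the former is consistent with the assumption $\Delta A<0$ and yields the claimed $\Delta A_1$.

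Next I would treat the branch $\Delta A>0$, on which $|\Delta A|^3=\Delta A^3$. Division by $\Delta A$ produces
\begin{eqnarray}
\tau\Delta A^2-\tfrac{1}{2}\phi_{AA}\Delta A+(1-\phi_A)=0,\nonumber
\end{eqnarray}
whose roots are $\frac{\phi_{AA}}{4\tau}\pm\sqrt{(\frac{\phi_{AA}}{4\tau})^2-\frac{1-\phi_A}{\tau}}$. For real roots one needs $(\frac{\phi_{AA}}{4\tau})^2\geq\frac{1-\phi_A}{\tau}$, and for both to be strictly positive (and thus compatible with the branch assumption) one additionally needs $\frac{\phi_{AA}}{4\tau}>0$, i.e.\ $\phi_{AA}>0$. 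Under these two conditions both candidates are admissible and supply the extra equilibria $\Delta A_{2,3}$ asserted in the proposition; otherwise there are none on this branch.

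There is no real obstacle here --- the problem is essentially a sign-disciplined application of the quadratic formula --- but the one point to guard against is mechanically reporting both quadratic roots without checking sign consistency with the case assumption. The argument that the $\Delta A<0$ branch always contributes exactly one admissible root rests on the fact that $\phi_A<0$ forces the discriminant to be strictly positive; this is the step that must be flagged explicitly so that the ``at least two equilibria'' clause is justified regardless of the sign or magnitude of $\phi_{AA}$.
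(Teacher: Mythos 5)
Your proposal is correct and follows essentially the same route as the paper's proof: factor out the root $\Delta A_0=0$, split on the sign of $\Delta A$ to resolve the absolute value, solve the resulting quadratics, and discard the root on the $\Delta A<0$ branch that is forced positive because $\phi_A<0$ makes $\frac{1-\phi_A}{\tau}>0$. Your explicit remark that this sign check is what guarantees the ``at least two equilibria'' clause regardless of $\phi_{AA}$ is exactly the point the paper's argument turns on.
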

\begin{proof} Using (\ref{1001}), we find the perfect foresight equilibrium
$\Delta A_0=0$. To identify the remaining equilibria, we
distinguish cases (i) $\Delta A<0$ and (ii) $\Delta A>0$.

1.) Assuming $\Delta A<0$: we note that $|\Delta A^3|=-\Delta
A^3$. Dividing by $\Delta A$, we find that (\ref{1001}) has roots:
$\Delta A_{1,2}=-\frac{1}{4\tau}\phi_{AA}\pm
\sqrt{(\frac{1}{4\tau}\phi_{AA})^2+\frac{1-\phi_A}{\tau}}$.
However, only one root satisfies the initial assumption $\Delta
A<0$. That is, $\Delta A=-\frac{1}{4\tau}\phi_{AA}-
\sqrt{(\frac{1}{4\tau}\phi_{AA})^2+\frac{1-\phi_A}{\tau}}<0$,
regardless of the sign of $\phi_{AA}$. The other root $\Delta
A=-\frac{1}{4\tau}\phi_{AA}+
\sqrt{(\frac{1}{4\tau}\phi_{AA})^2+\frac{1-\phi_A}{\tau}}>0$ is
positive since $(\frac{1}{4\tau}\phi_{AA})^2>0$, $\phi_A<0$,
$\frac{1-\phi_A}{\tau}>0$. It thus violates the initial assumption
$\Delta A<0$.

2.) Assuming $\Delta A>0$: we note that $|\Delta A^3|=\Delta A^3$
and find that (\ref{1001}) has two real roots $\Delta
A_{1,2}=\frac{1}{4\tau}\phi_{AA}\pm
\sqrt{(\frac{1}{4\tau}\phi_{AA})^2-\frac{1-\phi_A}{\tau}}$ if
$(\frac{1}{4\tau}\phi_{AA})^2>\frac{1-\phi_A}{\tau}$. Both of
these roots are negative if $\phi_{AA}<0$ violating the assumption
$\Delta A>0$. Hence, $\phi_{AA}>0$ is a sufficient condition for
$\Delta A>0$ equilibria to exist.
\end{proof}

The polynomial equilibria in propositions \ref{p2}-\ref{p4} have
in common that they originate from a coordination problem: In
their price forecasts, each agent takes the overall supply $A$ as
given. In equilibrium, however, aggregate supply depends on
agents' price forecasts. Hence, the price forecast itself is an
equilibrium outcome. At this point, it is clear that higher-order
polynomials yield even more equilibria, and that propositions
\ref{p2}-\ref{p4} carry over qualitatively once we introduce
demand $\Phi(\textbf{A})$, for a vector $\textbf{A}$ of
goods.

\section{Alternative Discounting}\label{A1} The price
estimate is as before:
\begin{eqnarray} \hat{P}=\phi_0(A_0,b_0)+\phi_A\Delta A+\phi_b\Delta b,\quad \Delta A=A-A_0,\quad \Delta b=b-b_0. \label{e22}\end{eqnarray}
The model differs in agents' discounting:
\begin{eqnarray}&&\pi_i=a_i\hat{P}-a_i\max[\Delta A^2,\Delta b^2,\Delta A\Delta
b]-\frac{1}{2}a_i^2\nonumber\\
&&a_i^*=\hat{P}-\max[\Delta A^2,\Delta b^2]\label{e24}
\end{eqnarray}
According to (\ref{e24}) there are two regimes. First, agents fear
that they miscalculate the impact of the exogenous parameter
change in case $\Delta b>\Delta A$. Second, agents fear that they
misjudge the other agents' reaction to the exogenous parameter
variation $\Delta A>\Delta b$. To analyze the equilibrium outcomes
associated with (\ref{e22}) and (\ref{e24}), we distinguish cases
where the parameter change is relatively large, $|\Delta
b|>|\Delta A|$, from cases where its impact is relatively small,
$|\Delta b|<|\Delta A|$. Note that $|\Delta A|$ is a function of
$|\Delta b|$. That is, we start with the assumption that, e.g.,
$|\Delta b|>|\Delta A|$ and solve for the equilibrium $\Delta A$.
In turn, we check whether the initial assumption $|\Delta
b|>|\Delta A|$ is correct.
\begin{prop} If $|\frac{\phi_b-\Delta b}{1-\phi_A}|<1$, there exists only one equilibrium $\Delta A=\Big(\frac{\phi_b}{1-\phi_A}-\frac{\Delta
b}{1-\phi_A}\Big)\Delta b$ in which $|\Delta b|>|\Delta A|$. In
this equilibrium $\underset{b\rightarrow b_0}{\lim}\frac{\Delta
A}{\Delta b}=\frac{\phi_b}{1-\phi_A}=\frac{\partial A}{\partial
b}_{|A_0,b_0}$.\label{p4}
\end{prop}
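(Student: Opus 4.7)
The plan is to exploit the fact that in the regime $|\Delta b|>|\Delta A|$ the nonlinear kink in the discount term collapses, turning the equilibrium condition into a simple linear equation in $\Delta A$ that can be solved in closed form; the only subtle point is to verify that the solution produced actually lies in the regime assumed.

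First I would impose the regime assumption $|\Delta b|\geq|\Delta A|$ so that $\max[\Delta A^2,\Delta b^2]=\Delta b^2$, and substitute this into the supply rule (\ref{e24}) together with the price forecast (\ref{e22}). Using market clearing $A=a_i^*$, subtracting $A_0=\phi_0(A_0,b_0)$ from both sides, and rearranging, the equilibrium condition reduces to
\begin{eqnarray}
(1-\phi_A)\Delta A=(\phi_b-\Delta b)\Delta b.\nonumber
\end{eqnarray}
Since $\phi_A<0$ by (\ref{be1}), the coefficient $1-\phi_A$ is strictly positive, so one can divide through to obtain the unique solution $\Delta A=\bigl(\frac{\phi_b}{1-\phi_A}-\frac{\Delta b}{1-\phi_A}\bigr)\Delta b$, which matches the expression in the statement. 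Uniqueness within the regime is immediate from the equation being linear in $\Delta A$.

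Next I would verify the self-consistency step, which is the only nontrivial piece. The candidate $\Delta A$ satisfies $|\Delta A|=\bigl|\frac{\phi_b-\Delta b}{1-\phi_A}\bigr||\Delta b|$, so the assumed inequality $|\Delta b|>|\Delta A|$ reduces exactly to the condition $\bigl|\frac{\phi_b-\Delta b}{1-\phi_A}\bigr|<1$ stated in the proposition. Hence under this hypothesis the solution lies strictly inside the postulated regime, and the equilibrium exists and is unique.

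Finally, to obtain the stated limit, I would simply pass to the limit $\Delta b\to 0$ in the closed-form solution: the second term $\frac{\Delta b}{1-\phi_A}$ vanishes and $\frac{\Delta A}{\Delta b}\to \frac{\phi_b}{1-\phi_A}$, which by Lemma \ref{L2} coincides with $\frac{\partial A}{\partial b}|_{A_0,b_0}$. The only step that might trip one up is keeping the regime-indicator bookkeeping consistent (one must remember that $\Delta A$ is endogenous and solve for it before checking the regime), but since the equation is linear this reduces to the single inequality stated in the hypothesis.
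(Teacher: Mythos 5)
Your proposal is correct and follows essentially the same route as the paper's own proof: impose the regime $|\Delta b|>|\Delta A|$ so the discount term becomes $\Delta b^2$, reduce the market-clearing condition to the linear equation $(1-\phi_A)\Delta A=(\phi_b-\Delta b)\Delta b$, solve, and then check that the hypothesis $\bigl|\frac{\phi_b-\Delta b}{1-\phi_A}\bigr|<1$ makes the solution self-consistent with the assumed regime. The limit computation at the end likewise matches the paper's argument.
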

\begin{proof} Individual supply is $a_i=\hat{P}-\max[\Delta A^2,\Delta b^2]$ under the assumption that $|\Delta b|>|\Delta A|$, we have
$a_i=\hat{P}-\Delta b^2$. Aggregate supply is thus
$A=\int_{i\in[0,1]}a_i=\hat{P}-\Delta b^2$. Equilibrium requires
$A=A_0+\phi_A\Delta A+\phi_b\Delta b-\Delta b^2$, respectively,
$\Delta A (1-\phi_A)+\Delta b(\Delta b-\phi_b)=0$. Solving yields
$\Delta A=\frac{\phi_b-\Delta b}{1-\phi_A}\Delta b$. It remains to
note that $|\frac{\phi_b-\Delta b}{1-\phi_A}|<1$ ensures that
$|\Delta b|>|\Delta A|$.\end{proof}

As before, Proposition \ref{p4}, $\Delta
A=\Big(\frac{\phi_b}{1-\phi_A}-\frac{\Delta
b}{1-\phi_A}\Big)\Delta b$, shows that agents extrapolate the
positive impact that a parameter change $\Delta
A=\frac{\phi_b}{1-\phi_A}$. At the same time, they are facing
increased uncertainty as to the actual price at which their
products sell $\Delta A=-\frac{\Delta b}{1-\phi_A}\Delta b$. A
large parameter change where $\Delta b>\phi_b$ thus reduces
economic activity as the uncertainty that it creates outweighs the
expansive effect $\phi_b>0$.

This leaves us with equilibria where agents are more concerned
about the potential error associated with aggregate supply
changes. In these cases, agents are primarily afraid that they
forecast prices incorrectly due to the change $\Delta A$. For
cases where $|\Delta b|<|\Delta A|$ we have:

\begin{prop} There exists an upper bound $\Delta b_1>0$ and an equilibrium where
$\Delta A_{1}=-\frac{1}{2}(1-\phi_A)-\sqrt{\phi_b\Delta
b+(\frac{1}{2}(1-\phi_A))^2}<0$ and $|\Delta b|<|\Delta A|$, if
$\Delta b\in[0,\Delta b_1]$. If $\frac{\phi_b}{1-\phi_A}>1$ there
exists an upper bound $\Delta b_2>0$ and a second equilibrium
$\Delta A_2=-\frac{1}{2}(1-\phi_A)+\sqrt{\phi_b\Delta
b+(\frac{1}{2}(1-\phi_A))^2}>0$ where $|\Delta b|<|\Delta A|$, if
$\Delta b\in[0,\Delta b_2]$. \label{p5}
\end{prop}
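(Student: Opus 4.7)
The plan is to proceed exactly as in Proposition \ref{p4}, but now under the complementary conjecture $|\Delta b|<|\Delta A|$. Under that conjecture the discount term $\max[\Delta A^2,\Delta b^2]$ in (\ref{e24}) equals $\Delta A^2$, so aggregating individual supply gives $A=\hat P-\Delta A^2$. Substituting the linear price estimate (\ref{e22}) and using $A_0=\phi_0(A_0,b_0)$ turns the market-clearing condition into the quadratic
\begin{equation*}
\Delta A^{2}+(1-\phi_A)\Delta A-\phi_b\Delta b=0.
\end{equation*}
The two roots are exactly $\Delta A_{1,2}=-\tfrac{1}{2}(1-\phi_A)\pm\sqrt{\phi_b\Delta b+\bigl(\tfrac{1}{2}(1-\phi_A)\bigr)^{2}}$, which are real for every $\Delta b\geq0$ because $\phi_b>0$ and $\phi_A<0$.

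Next I carry out a sign analysis. The ``minus'' root $\Delta A_1$ is the sum of two negative terms, hence strictly negative for all $\Delta b\geq 0$; in particular $\Delta A_1=-(1-\phi_A)<0$ at $\Delta b=0$. The ``plus'' root $\Delta A_2$ vanishes at $\Delta b=0$ and, since $\phi_b>0$, is strictly increasing in $\Delta b$, so $\Delta A_2>0$ for every $\Delta b>0$. So the \emph{signs} demanded by the proposition are automatic; the real work is to identify the parameter region in which each root is consistent with the working hypothesis $|\Delta b|<|\Delta A|$.

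For $\Delta A_1$, I would define $g_1(\Delta b):=|\Delta A_1(\Delta b)|-\Delta b$. At $\Delta b=0$, $g_1(0)=1-\phi_A>0$. As $\Delta b\to\infty$, the square root in $|\Delta A_1|$ grows like $\sqrt{\phi_b\Delta b}$ and is therefore dominated by the linear $\Delta b$, so $g_1(\Delta b)\to-\infty$. By continuity, $g_1$ has at least one zero, and monotonicity (the square-root is concave in $\Delta b$, so $g_1$ is eventually strictly decreasing) yields a unique positive $\Delta b_1$, with $g_1\geq 0$ on $[0,\Delta b_1]$. For $\Delta A_2$, the analogous function $g_2(\Delta b):=\Delta A_2(\Delta b)-\Delta b$ satisfies $g_2(0)=0$ and $g_2'(0)=\phi_b/(1-\phi_A)-1$, which is positive precisely under the stated hypothesis $\phi_b/(1-\phi_A)>1$; once more $\Delta A_2\sim\sqrt{\phi_b\Delta b}$ for large $\Delta b$, so $g_2\to-\infty$, and continuity plus the same concavity argument give the upper bound $\Delta b_2>0$ on which $g_2\geq 0$. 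When $\phi_b/(1-\phi_A)\leq 1$ one has $g_2\leq 0$ in a right neighborhood of zero, ruling out any equilibrium of this type.

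The routine piece is the quadratic formula and the sign check; the genuine obstacle is the self-consistency step, i.e. pinning down $\Delta b_1,\Delta b_2$ and verifying that they really are single crossings. Concavity of $\sqrt{\phi_b\Delta b+c}$ in $\Delta b$ plus the strict initial inequality ($g_1(0)>0$, respectively $g_2'(0)>0$) is what makes each $g_j$ cross zero exactly once from above, and is therefore the linchpin of the argument.
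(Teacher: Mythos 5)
Your proposal is correct and follows essentially the same route as the paper's proof: impose the conjecture $|\Delta b|<|\Delta A|$, reduce market clearing to the quadratic $\Delta A^{2}+(1-\phi_A)\Delta A-\phi_b\Delta b=0$, and then verify self-consistency from the behaviour at $\Delta b=0$ (value $-(1-\phi_A)$ for the first root, slope $\phi_b/(1-\phi_A)$ for the second) together with the vanishing slope of $\sqrt{\phi_b\Delta b+c}$ at infinity. Your concavity argument for the single crossing of $g_j$ is a slight sharpening of the paper's existence-of-an-upper-bound step, but not a different approach.
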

\begin{proof} Individual supply is $a_i=\hat{P}-\max[\Delta A^2,\Delta b^2]$ under the assumption that $|\Delta b|<|\Delta A|$, we have
 $a_i=\hat{P}-\Delta A^2$. Aggregate supply is thus $A^S=\int_{i\in[0,1]}a_i=\hat{P}-\Delta A^2$. Equilibrium requires $A^S=A^D$ such that
 $A=A_0+\phi_A\Delta A+\phi_b\Delta b-\Delta A^2$, respectively, $\Delta A^2+(1-\phi_A)\Delta A +\phi_b\Delta b=0$.
Solving yields $\Delta
A_{1,2}=-\frac{1}{2}(1-\phi_A)\pm\sqrt{\phi_b\Delta
b+(\frac{1}{2}(1-\phi_A))^2}$. Both roots are real since we
assumed $\phi_b>0$ and $\Delta b>0$. It remains to specify the
conditions under which our initial hypothesis $|\Delta b|<|\Delta
A|$ holds. We start with $\Delta
A_{1}=-\frac{1}{2}(1-\phi_A)-\sqrt{\phi_b\Delta
b+(\frac{1}{2}(1-\phi_A))^2}$ and note (i) $\Delta A_1(\Delta
b=0)=-(1-\phi_A)$ such that $|\Delta b|=0<|\Delta A|$, (ii) the
derivative $\frac{\partial \Delta A}{\partial \Delta
b}=\frac{-\frac{\phi_b}{2}}{\sqrt{\phi_b\Delta
b+(\frac{1}{2}(1-\phi_A))^2}}$ vanishes as $\Delta b$ becomes
large. Taken together, (i) and (ii) imply that an upper bound
$\Delta b_1$ exists, such that $|\Delta b|<|\Delta A|$ as long as
$\Delta b\in[0,\Delta b_1]$. Similarly, regarding the second
equilibrium $\Delta A_2=-\frac{1}{2}(1-\phi_A)+\sqrt{\phi_b\Delta
b+(\frac{1}{2}(1-\phi_A))^2}$, we note that (i) $\Delta A_2(\Delta
b=0)=0$ such that $|\Delta b|=|\Delta A|=0$, (ii) the derivative
$\frac{\partial \Delta A}{\partial \Delta
b}=\frac{\frac{\phi_b}{2}}{\sqrt{\phi_b\Delta
b+(\frac{1}{2}(1-\phi_A))^2}}=_{|\Delta
b=0}\frac{\phi_b}{(1-\phi_A)}$ and $\underset{\Delta
b\rightarrow\infty}{\lim}\frac{\partial \Delta A}{\partial \Delta
b}=\underset{\Delta
b\rightarrow\infty}{\lim}\frac{\frac{\phi_b}{2}}{\sqrt{\phi_b\Delta
b+(\frac{1}{2}(1-\phi_A))^2}}=0$. Taken together, (i) and (ii)
imply that if $\frac{\phi_b}{1-\phi_A}>1$ there exists an upper
bound $\Delta b_2$ such that $|\Delta b|<|\Delta A|$ as long as
$\Delta b\in[0,\Delta b_2]$.
\end{proof}

The first equilibrium in Proposition \ref{p5} corresponds to the
perfect foresight equilibrium, $\Delta A=0$ of Proposition
\ref{p1}: In the limit, where the parameter change becomes
infinitesimally small, we obtain $\Delta A=0$. In this
equilibrium, increases in $b$ indeed increase equilibrium supply
provided that these increases are small such that $\Delta b<\Delta
b_1<\Delta b_2$. In the second equilibrium, which corresponds to
the crisis equilibrium in Proposition \ref{p1}, output is strictly
decreasing in $b$.

Taken together, Propositions \ref{p4} and \ref{p5} suggest that
bold interventions by the government tend to reduce economic
activity as such changes make it more difficult for agents to
forecast prices. Moreover, in the crisis equilibrium $\Delta A_2$,
government interventions, which would increase output were there
no computational frictions, always reduce income.


\section{Learning The REE}\label{ALR}

We have shown that agents can learn the $A_0$ equilibrium if (i)
demand is convex and (ii) agents always coordinate on the ``+"
equilibrium. We now show that agents also learn the REE if (i)
demand is not convex, and (ii) when agents, e.g., alternate
between playing the ``+" and ``-'' root equilibria.

We write the equilibrium condition as:
\begin{eqnarray} A-\phi(A^*)=\phi_A(A^*)(A-A^*)-(A-A^*)^2. \label{ler1}  \end{eqnarray}
The left-hand side of (\ref{ler1}) represents the difference
between supply and demand in an equilibrium where agents use their
knowledge of $A^*,\phi(A^*)$ to estimate the price. This
difference is $0$ in the perfect foresight equilibrium where
$A_0=\phi(A_0)$. Regarding the right-hand side, we define
$\varepsilon=(A-A^*)$, which yields
\begin{eqnarray} A-\phi(A^*)=\phi_A\varepsilon-\varepsilon^2, \label{ler2} \end{eqnarray}
over time as agents observe an increasing number of price quantity
pairs $\{\phi(A_t),A_t\}_{t=0}^{T}$. The distance
$\varepsilon=(A-A^*)$ between the aggregate supply $A$ and the
point of estimation $A^*$ will go to zero.\footnote{To see this
note that, given our assumptions on $\phi()$, all real-valued
equilibrium supplies, (\ref{l244}), fall into a compact interval
$[0,\hat{A}]$. In turn, the sequence of equilibrium supplies
$\{A_t\}_{t=0}^{T}$, for which agents know demand, partitions this
interval. As time progresses, this partition becomes finer and
finer. That is, if we order the quantities $A_t$ such that
$A_l<A_{l+1}, l=1,2,3...T$, the interval between $A_l$ and
$A_{l+1}$ is either filled with new equilibria, or, if there exist
no equilibria between them, economic activity takes place
elsewhere on the interval $[0,\hat{A}]$. Finally, we note that
agents using $A^*$ to think about a deviation from $A^*$, might
choose a quantity $A(A^*)$ which is closer to $A^{**}$ than to
$A^*$. Thus, to forecast the selling price $\phi(A(A^*))$, agents
would rather use $A^{**}$ as the point of approximation. In turn,
once agents use $A^{**}$ they might choose a supply $A(A^{**})$,
which, however, is again closer to $A^{*}$ than $A^{**}$. Thus
agents would switch back to $A^{*}$, and so on. In such
situations, there exists no symmetric equilibrium between points
$A^{*}$ and $A^{**}$. Instead, Appendix \ref{A3} shows that there
does exist an asymmetric equilibrium, where a fraction $\psi$ of
the agents uses point $A^*$ and the remaining fraction $1-\psi$
uses point $A^{**}$, resulting in equilibrium supply
$A=\frac{A^{*}+A^{**}}{2}$.}\label{fn} Hence, from (\ref{ler2}) we
have $\underset{\varepsilon\rightarrow 0}{\lim}(A-\phi(A^*))=0$.
However, the only point where $A=\phi(A)$ is $A_0$, the perfect
foresight equilibrium quantity. Hence, as agents learn more data
on demand, they eventually move
towards the efficient equilibrium. 

\section{Asymmetric Equilibria}\label{A3} Agents using
$\phi(A^*)$ to think about a deviation from $A^*$, might choose a
quantity $A(A^*)$ which is closer to $A^{**}$ than to $A^*$. Thus,
to forecast the selling price $\phi(A(A^*))$, they would rather
use $A^{**}$ as the point of departure. In turn, once agents use
$A^{**}$ they might choose a supply $A(A^{**})$, which, however,
is again closer to $A^{*}$ than $A^{**}$. Thus, given (\ref{l1}),
agents would switch back to $A^{*}$, and so forth:
\begin{eqnarray} &&|A(A^*)-A^*|>|A(A^*)-A^{**}|\label{hh0}\\
&& |A(A^{**})-A^{**}|>|A(A^{**})-A^{*}|.\label{hh1}\end{eqnarray}
In such a situation, there exists no symmetric equilibrium between
points $A^{*}$ and $A^{**}$. Instead, there exists an asymmetric
equilibrium in which a mass $\psi\in(0,1)$ of agents use $A^{*}$
and a mass $1-\psi$ use $A^{**}$.

It follows from (\ref{l1}) that agents are only indifferent
between using $A^*$ and $A^{**}$ if the equilibrium quantity $A$
satisfies $A=\frac{A^{*}+A^{**}}{2}$. That is, agents see $A^*$
and $A^{**}$ as equally informative once the supply they want to
learn about is equally distant from both points. To establish the
existence of an equilibrium we have to prove that there exist
shares $\psi$ and $1-\psi$ for which the equilibrium supply
$A_{\psi}\psi+A_{1-\psi}(1-\psi)$ indeed equals
$\bar{A}=\frac{A^{*}+A^{**}}{2}$.

Once we denote the equilibrium supply by $A(\psi)=\psi
A(A^*,A(\psi))+(1-\psi)A(A^{**},A(\psi))$, we can write the
equations that determine equilibrium as
\begin{eqnarray} &&A(\psi)=\bar{A}\quad \bar{A}:=\frac{A^{*}+A^{**}}{2}\label{h1}\\
&& A(\psi):=\psi
A_{\psi}(A^*,A(\psi))+(1-\psi)A_{1-\psi}(A^{**},A(\psi))\label{h11}\\
&& A_{\psi}(A^*,A(\psi))=\hat{P}_{\psi}-(A(\psi)-A^*)\quad
A_{\psi}(A^{**},A(\psi))=\hat{P}_{1-\psi}-(A(\psi)-A^{**})\\
&&\hat{P}_{\psi}=\phi(A^*)+\phi_A(A^*)(A(\psi)-A^*)\\
&&\hat{P}_{1-\psi}=\phi(A^{**})+\phi_A(A^{**})(A(\psi)-A^{**})\label{h2}
\end{eqnarray}
Solving (\ref{h11})-(\ref{h2}) for supply $A(\psi)$ yields:
\begin{eqnarray} \hspace{-1cm}&& A(\psi)=-\frac{p}{2}\pm\sqrt{-q+(\frac{p}{2})^2}\nonumber\\
\hspace{-1cm}&&
-q=\psi\phi(A^*)+(1-\psi)\phi(A^{**})-\psi\phi_A(A^*)-(1-\psi)\phi_AA^{**}-\psi
(A^*)^2-(1-\psi)(A^{**})^2\nonumber\\
\hspace{-1cm}&&p=1+2\psi
A^*+2(1-\psi)A^{**}-\psi\phi_A(A^*)-(1-\psi)\phi_A(A^{**})>0\nonumber
\end{eqnarray}
Given our assumption that supply is downward-sloping, we have
$-p<0$. Hence, there is only one real root
$A(\psi)=-\frac{p}{2}+\sqrt{-q+(\frac{p}{2})^2}$ with positive
supply. Note in particular that $A(\psi=0)$ ($A(\psi=1)$) is the
supply when all agents use $A^*$ ($A^{**}$) as the point of
reasoning. By our initial hypothesis (\ref{hh0})-(\ref{hh1}), we
have $A(\psi=0)>\bar{A}$ and $A(\psi=1)<\bar{A}$. Hence, there
exists an intermediate value $\tilde{\psi}$, at which
$A(\tilde{\psi})=\bar{A}$, if $\sqrt{-q+(\frac{p}{2})^2}$ as
required by the indifference condition (\ref{h1}).\footnote{The
root $\sqrt{-q+(\frac{p}{2})^2}$ is never complex as $\psi$ runs
from $0$ to $1$. To see this we recall $-q(\psi=0)>0$,
$-q(\psi=1)>0$ and that the derivative $\frac{\partial
-q}{\partial \psi}$ is not a function of $\psi$ itself and thus
cannot change signs as $\psi$ varies. Accordingly, $-q>0 \forall
\psi\in[0,1]$. Hence, $\sqrt{-q+(\frac{p}{2})^2}$ is always real.}

\section{Asymmetric Information Equilibria}\label{A2.1}

We solve the model in two steps. First, we guess the equilibrium
outcome. Given this guess, we solve the estimation problem for
agent $i$. Second, we solve for the equilibrium and verify our
guess.

\paragraph{Problem of an individual agent:} To find the optimal supply of an individual agent,
we must solve (\ref{ha2})-(\ref{ha3}). To do so, we start with the
\emph{guess} that agent $i$ beliefs that agents $j$ hold the same
believe over aggregate supply that he holds himself, i.e.,
$\hat{A}|A_i=(\hat{A}|A_j)|A_i=\hat{A}$. Given this guess, we show
that:
\begin{eqnarray} \hat{P}|A_i=(\hat{P}|A_j)|A_i. \label{ha4} \end{eqnarray}
Put differently, equation (\ref{ha4}) means that agent $i$
believes that agents $j$ observe points, which lie on his
estimated demand curve. That is, agent $i$ knows that agent $j$
estimates the price as
\begin{eqnarray} \hat{P}|A_j=\phi(A_j)+\phi_A(A_j)(\hat{A}|A_j-A_j), \label{ha5}\nonumber \end{eqnarray}
and thus $i$ estimates $\hat{P}|A_j$ as:
\begin{eqnarray} (\hat{P}|A_j)|A_i=\phi(A_j)|A_i+\phi_A(A_j)|A_i((\hat{A}|A_j)|A_i-A_j), \label{ha6} \end{eqnarray}
where
\begin{eqnarray}&& \phi(A_j)|A_i=\phi(A_i)+\phi_A(A_i)(A_j-A_i),\label{ha7}\\
&&\phi_A(A_j)|A_i=\phi_A(A_i),\quad
\hat{A}|A_i=(\hat{A}|A_j)|A_i=\hat{A}. \label{ha8}
\end{eqnarray}
Taken together, (\ref{ha6})-(\ref{ha8}) mean that agent $i$ uses
polynomials to approximate the demand curve upon which the other
player observes a point $A_j,\phi(A_j),\phi_A(A_j)$. Using
(\ref{ha7})-(\ref{ha8}), (\ref{ha6}) rewrites:
\begin{eqnarray} (\hat{P}|A_j)|A_i&&=\phi(A_i)+\phi_A(A_i)(A_j-A_i)+\phi_A(A_i)(\hat{A}-A_j),\nonumber\\
                                    &&=\phi(A_i)+\phi_A(A_i)(\hat{A}|A_i-A_i)=_{|(\ref{ha2})}\hat{P}|A_i. \label{ha9}\nonumber \end{eqnarray}
Agent $i$ thus believes that agent $j$ will work with a price
estimate that is identical to the one he uses himself and thus he
will conclude that $j's$ supply forecast is identical to his own,
such that $\hat{A}|A_i=(\hat{A}|A_j)|A_i=\hat{A}$, which confirms
our initial guess. It remains to solve (\ref{ha2})-(\ref{ha3}) for
player $i's$ forecast in the two equilibria $A_{1,2}$:
\begin{eqnarray} \hat{A}_{1,2}|A_i&&=A_i-\frac{1-\phi_A(A_i)}{2}\pm\sqrt{(\phi(A_i)-A_i)+\Big(\frac{1-\phi_A(A_i)}{2}\Big)^2},\label{eq1}\nonumber\\
a_i&&=\hat{P}|A_i-(\hat{A}|A_i-A_i)^2\label{eq11}\nonumber\\
&&=\phi(A_i)+\phi_A(A_i)(\hat{A}|A_i-A_i)-(\hat{A}|A_i-A_i)^2\label{eq111}\\
a_{i;1,2}&&=\phi(A_i)+\phi_A(A_i)\Big(-\frac{1-\phi_A(A_i)}{2}\pm\sqrt{(\phi(A_i)-A_i)+\Big(\frac{1-\phi_A(A_i)}{2}\Big)^2}\Big)\nonumber\\
&&-\Big(-\frac{1-\phi_A(A_i)}{2}\pm\sqrt{(\phi(A_i)-A_i)+\Big(\frac{1-\phi_A(A_i)}{2}\Big)^2}\Big)^2.\label{eq21}\label{eq2}
\end{eqnarray}


\paragraph{Equilibrium:}
From (\ref{eq2}) we calculate equilibrium supply as:
\begin{eqnarray} A_{1,2}=\int_{[0,1]}a_{1,2}(A_i)f(A_i)di \label{ha1.1}. \end{eqnarray}
Concerning the equilibrium quantity $A_k, k=1,2$, we note

\begin{lemma}\label{Lem4} If demand $\phi$ is quasi-convex, then equilibrium
output across both equilibria $A_k, k=1,2$ falls short of
efficient output $A_0$.\end{lemma}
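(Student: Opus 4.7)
My plan is to bound each agent's individual supply $a_i$ from above by the efficient quantity $A_0$, and then integrate against $f$ to obtain the claim. The calculations leading to (\ref{eq111})--(\ref{eq2}) already encode a clean implicit characterization of $a_i$ that I can manipulate directly, so I will not need to work with the explicit quadratic roots.

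First I would observe that the two expressions in (\ref{eq111})--(\ref{eq2}) collapse. Setting $\Delta_i:=\hat{A}|A_i-A_i$, the fixed-point equation behind (\ref{eq1}) reads $\Delta_i^2+(1-\phi_A(A_i))\Delta_i+(A_i-\phi(A_i))=0$, i.e., $\Delta_i^2=(\phi_A(A_i)-1)\Delta_i+\phi(A_i)-A_i$. Plugging this into $a_i=\phi(A_i)+\phi_A(A_i)\Delta_i-\Delta_i^2$ produces $a_i=A_i+\Delta_i=\hat{A}|A_i$, so that agent $i$'s supply equals their own forecast of aggregate supply. Equivalently, I may rewrite the fixed-point equation as the key implicit identity
\[
\phi(A_i)=A_i+(1-\phi_A(A_i))(a_i-A_i)+(a_i-A_i)^2,
\]
which holds at both the ``$+$'' and the ``$-$'' roots and for every $i$.

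Second, I would invoke convexity of $\phi$ through its tangent-line inequality $\phi(a_i)\ge\phi(A_i)+\phi_A(A_i)(a_i-A_i)$, which rearranges to $\phi(A_i)\le\phi(a_i)-\phi_A(A_i)(a_i-A_i)$. Substituting this bound into the implicit identity above, the $\phi_A(A_i)(a_i-A_i)$ terms cancel and I am left with
\[
\phi(a_i)\ge a_i+(a_i-A_i)^2\ge a_i,
\]
with strict inequality whenever $a_i\neq A_i$. Since $\phi$ is strictly downward sloping with unique fixed point $A_0=\phi(A_0)$ by Lemma \ref{p1}, the map $x\mapsto\phi(x)-x$ is strictly decreasing with unique root $A_0$; hence $\phi(a_i)\ge a_i$ forces $a_i\le A_0$, with strict inequality as soon as $A_i\neq A_0$.

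Finally, integrating against the density $f$ gives $A_k=\int a_i(A_i)f(A_i)\,di\le A_0$ for $k=1,2$, strictly whenever $f$ places positive mass away from $A_0$. The main technical obstacle is the first step: recognizing that the second-order penalty and the equilibrium consistency condition together collapse $a_i$ to $\hat{A}|A_i$, so that the entire problem reduces to a single scalar inequality; once that is in hand, the convexity step is entirely standard. I also note that univariate quasi-convexity alone does not deliver the tangent-line inequality I use, so I interpret the lemma's hypothesis as genuine convexity, in line with the author's prose remark that agents systematically underestimate demand precisely when $\phi$ is convex.
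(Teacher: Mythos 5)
Your proof is correct, and it rests on the same central tool as the paper's own argument -- the convexity (tangent-line) inequality showing that agents underestimate a convex demand curve -- but your chain of inequalities is tighter and, in fact, repairs a gap in the printed proof. The paper bounds $\phi(A_i)+\phi_A(A_i)(\hat{A}|A_i-A_i)$ directly by $\phi(A_0)$; as written, the tangent-line inequality only delivers $\phi(\hat{A}|A_i)$ on the right-hand side, and passing from $\phi(\hat{A}|A_i)$ to $\phi(A_0)$ would require $\hat{A}|A_i\geq A_0$, which is not available (it is essentially the negation of what one is trying to prove). Your argument supplies exactly the two steps needed to close this: first, the observation that the fixed-point condition behind the forecast forces $a_i=\hat{A}|A_i$, so the tangent-line bound becomes the self-referential inequality $\phi(a_i)\geq a_i+(a_i-A_i)^2\geq a_i$; second, the observation that $x\mapsto\phi(x)-x$ is strictly decreasing with unique root $A_0$ (Lemma \ref{p1}), so $\phi(a_i)\geq a_i$ forces $a_i\leq A_0$ pointwise, and integration against $f$ then yields $A_k\leq A_0$ for $k=1,2$. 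Your closing caveat is also well taken: quasi-convexity of a univariate function does not imply the tangent-line inequality (every monotone function is quasi-convex), so the hypothesis must be read as genuine convexity -- which is what the paper's own proof and the surrounding prose in Section \ref{s4} in fact invoke.
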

\begin{proof} To compare equilibrium output (\ref{ha1.1}) to efficient output $A_0$, we recall that $A_0=\phi(A_0)$.
We also recall that, for convex functions $f$, we have
$f'(u)\leq\frac{f(v)-f(u)}{v-u}$. In the current context, this
means that agents tend to underestimate convex demand functions:
\begin{eqnarray} \phi(A_i)+\phi_A(A_i)(A-A_i)\leq\phi(A)\nonumber\end{eqnarray}
recalling (\ref{eq111}) we have:
\begin{eqnarray}a_i=\phi(A_i)+\phi_A(A_i)(\hat{A}|A_i-A_i)-(\hat{A}|A_i-A_i)^2\leq\phi(A_0)-(\hat{A}|A_i-A_i)^2\leq\phi(A_0)=A_0. \nonumber\end{eqnarray}
Put differently, agents supply less than the efficient quantity
since (i) they underestimate demand and (ii) since they know that
their price estimate is inaccurate.
\end{proof}

The equilibria in (\ref{ha1.1}), feature two sources of
inefficiency. First, aggregate output falls short of the efficient
level $A_0$. Second, since price estimates vary, output and the
marginal cost of output differ across firms. Aggregate output is
thus produced inefficiently.

\end{appendix}


\newpage

\newpage

\begin{appendix}

\addcontentsline{toc}{section}{References}
\markboth{References}{References}
\bibliographystyle{apalike}
\bibliography{References}

\end{appendix}

\end{document}